\documentclass[11pt]{article}
\usepackage[margin=1in]{geometry}

  \usepackage[utf8]{inputenc}
  \usepackage{amsmath,amsfonts,amsthm,amssymb}
  \usepackage{thmtools}
  \usepackage{xspace}
  \usepackage{xcolor}
  \usepackage{color}
  \usepackage{tikz}
  \usepackage[noline,noend,ruled]{algorithm2e}
  \usepackage{tikz-qtree}
  \usetikzlibrary{decorations.pathmorphing}
  \usetikzlibrary{decorations.markings}
  \usetikzlibrary{shapes.geometric, snakes}
  \usepackage{authblk}
  \usepackage{enumerate}
  
  \usepackage[colorlinks]{hyperref}
  \usepackage[capitalise]{cleveref}

  \theoremstyle{plain}
  \newtheorem{theorem}{Theorem}
  \newtheorem{lemma}[theorem]{Lemma}  
  \newtheorem{corollary}[theorem]{Corollary}

  \newtheorem{observation}[theorem]{Observation}
  \theoremstyle{definition}

  \newtheorem{example}[theorem]{Example}

  \usepackage{mathtools}
\DeclarePairedDelimiter{\floor}{\lfloor}{\rfloor}

\newcommand{\Oh}{\mathcal{O}}
\newcommand{\T}{\mathcal{T}}

\newcommand{\Path}{\mathsf{Path}}
\newcommand{\Triangle}{\mathsf{Triangle}}
\newcommand{\Upper}{\mathsf{Upper}}
\newcommand{\Lower}{\mathsf{Lower}}
\newcommand{\locus}{\mathit{locus}}
\newcommand{\rot}{\mathit{rot}}
\newcommand{\per}{\mathit{per}}
\newcommand{\OOcc}{\mathit{OvOcc}}
\newcommand{\suf}{\mathit{suf}}
\newcommand{\expon}{\mathit{exp}}

\newcommand{\AllPartialCovers}{\textsc{AllPartialCovers}\xspace}
\renewcommand{\c}{\mathit{cv}}
\renewcommand{\o}{\mathit{ov}}
\newcommand{\occ}{\mathit{occ}}
\newcommand{\n}{\mathit{nov}}
\newcommand{\CST}{\mathit{CST}}
\newcommand{\ST}{\mathit{ST}}
\newcommand{\WA}{\mathit{WA}}
\newcommand{\R}{\mathcal{R}}
\newcommand{\Sq}{\mathit{Sq}}

\renewcommand{\output}{\mathsf{output}}

\def\dd{\mathinner{.\,.}}

 \newcommand{\defproblem}[3]{
  \vspace{2mm}
\noindent\fbox{
  \begin{minipage}{0.96\textwidth}
  \textsc{#1}\\
  {\bf{Input:}} #2  \\
  {\bf{Output:}} #3
  \end{minipage}
  }
  \vspace{2mm}
}

\newcommand{\defdsproblem}[4]{
  \vspace{2mm}
\noindent\fbox{
  \begin{minipage}{0.96\textwidth}
  \textsc{#1}\\
  {\bf{Input:}} #2  \\
  {\bf{Query input:}} #3 \\
  {\bf{Query output:}} #4
  \end{minipage}
  }
  \vspace{2mm}
}

\usetikzlibrary{shapes,calc,arrows,positioning,decorations.markings,trees}

\overfullrule=2mm

\title{Linear Time Construction of Cover Suffix Tree and Applications}

\author[1,2]{Jakub Radoszewski}

\affil[1]{University of Warsaw, Poland, \texttt{jrad@mimuw.edu.pl}}
\affil[2]{Samsung R\&D, Warsaw, Poland}

\date{\vspace{-5ex}}

\begin{document}

\maketitle

\begin{abstract}
The Cover Suffix Tree (CST) of a string $T$ is the suffix tree of $T$ with additional explicit nodes corresponding to halves of square substrings of $T$.
In the CST an explicit node corresponding to a substring $C$ of $T$ is annotated with two numbers:
the number of non-overlapping consecutive occurrences of $C$ and
the total number of positions in $T$ that are covered by occurrences of $C$ in $T$.
Kociumaka et al.\ (\emph{Algorithmica}, 2015) have shown how to compute the CST of a length-$n$ string in $\Oh(n \log n)$ time.
We show how to compute the CST in $\Oh(n)$ time assuming that $T$ is over an integer alphabet.

Kociumaka et al.\ (\emph{Algorithmica}, 2015; \emph{Theor.\ Comput.\ Sci.}, 2018) have shown that knowing the CST of a length-$n$ string $T$,
one can compute a linear-sized representation of all seeds of $T$ as well as all shortest $\alpha$-partial covers and seeds in $T$ for a given $\alpha$ in $\Oh(n)$ time.
Thus our result implies linear-time algorithms computing these notions of quasiperiodicity.
The resulting algorithm computing seeds is substantially different from the previous one (Kociumaka et al., SODA 2012, \emph{ACM Trans.\ Algorithms}, 2020).
Kociumaka et al.\ (\emph{Algorithmica}, 2015) proposed an $\Oh(n \log n)$-time algorithm for computing a shortest $\alpha$-partial cover for each $\alpha=1,\ldots,n$; we improve this complexity to $\Oh(n)$.

Our results are based on a new characterization of consecutive overlapping occurrences of a substring $S$ of $T$ in terms of the set of runs (see Kolpakov and Kucherov, FOCS 1999) in $T$.
This new insight also leads to an $\Oh(n)$-sized index for reporting overlapping consecutive occurrences of a given pattern $P$ of length $m$ in $\Oh(m+\mathsf{output})$ time, where $\mathsf{output}$ is the number of occurrences reported.
In comparison, a general index for reporting bounded-gap consecutive occurrences of Navarro and Thankachan (\emph{Theor.\ Comput.\ Sci.}, 2016) uses $\Oh(n \log n)$ space.

\medskip
\noindent
\textbf{Keywords:} cover (quasiperiod), seed, suffix tree, run (maximal repetition)
\end{abstract}

\section{Introduction}\label{sec:intro}
The Cover Suffix Tree (CST, in short) of a string $T$, denoted as $\CST(T)$, is the suffix tree of $T$ ($\ST(T)$) augmented with additional nodes and values. For every substring $C$ of $T$, $\CST(T)$ allows to efficiently compute the number of positions in $T$ that are covered by occurrences of $C$, provided that the node representing $C$ in $\CST(T)$ is known. Thus the CST is a generalization of string covers~\cite{DBLP:journals/ipl/ApostolicoFI91}. The CST of a string was introduced by Kociumaka et al.~\cite{DBLP:journals/algorithmica/KociumakaPRRW15} for computing so-called \emph{partial covers} of a string (see below).
Other applications of the CST to the field of quasiperiodicity (see~\cite{DBLP:journals/tcs/ApostolicoE93}) were discussed in \cite{DBLP:journals/algorithmica/KociumakaPRRW15,DBLP:journals/tcs/KociumakaPRRW18a}.

Let $n$ denote the length of a string $T$.
Kociumaka et al.~\cite{DBLP:journals/algorithmica/KociumakaPRRW15} presented an algorithm computing $\CST(T)$ in $\Oh(n \log n)$ time.
Our main result is an algorithm that constructs $\CST(T)$ in $\Oh(n)$ time.
We assume that $T$ is over an \emph{integer alphabet} $\{0,\ldots,n^{\Oh(1)}\}$.
This assumption has become a standard in suffix tree construction algorithms since the linear-time suffix tree construction algorithm of Farach~\cite{DBLP:conf/focs/Farach97}.

In \cref{ss:CST} we provide more details on the CST.
Then in \cref{ss:app1} we discuss applications of our result to computing various notions of quasiperiodicity and in \cref{ss:app2} we present an application of our approach to a variant of text indexing.

\renewcommand{\tabcolsep}{1pt}
\begin{figure}[htpb]
\centering
\begin{tikzpicture}[yscale=1.1]
\draw (0,9) -- node[sloped,right,rotate=90] {
\begin{tabular}{c}
\#
\end{tabular}
} (0.5,8);
\draw (-2,8) -- node[sloped,right,rotate=90] {
\begin{tabular}{c}
\#
\end{tabular}
} (-1.5,6.5);
\draw (-3,7.5) -- node[sloped,left,rotate=270] {
\begin{tabular}{c}
\#
\end{tabular}
} (-4,6);
\draw (-5,7) -- node[sloped,left,rotate=270] {
\begin{tabular}{c}
\#
\end{tabular}
} (-5.5,5.5);

\draw (0,9) -- node[sloped,left,rotate=270] {
\begin{tabular}{c}
a
\end{tabular}
} (-2,8);

\draw (-2,8) -- node[sloped,left,rotate=270] {
\begin{tabular}{c}
a
\end{tabular}
} (-3,7.5);

\draw (-3,7.5) -- node[sloped,left,rotate=270] {
\begin{tabular}{c}
a
\end{tabular}
} (-5,7);

\draw (-5,7) -- node[sloped,left,rotate=270] {
\begin{tabular}{c}
a\\\#
\end{tabular}
} (-6.5,6);

\draw (-5,7) -- node[sloped,right,rotate=90] {
\begin{tabular}{c}
b\\a\\a
\end{tabular}
} (-4.5,5.5);
\draw (-4.5,5.5) -- node[sloped,left,rotate=270] {
\begin{tabular}{c}
a\\a\\\#
\end{tabular}
} (-5,4);
\draw (-4.5,5.5) -- node[sloped,right,rotate=90] {
\begin{tabular}{c}
b\\a\\a\\b\\a\\a\\a\\b\\a\\a\\a\\a\\\#
\end{tabular}
} (-4,-0.5);

\draw (-3,7.5) -- node[sloped,right,rotate=90] {
\begin{tabular}{c}
b
\end{tabular}
} (-2.83333,7);
\draw (-2.83333,7) -- node[sloped,right,rotate=90] {
\begin{tabular}{c}
a
\end{tabular}
} (-2.66666,6.5);
\draw (-2.666666,6.5) -- node[sloped,right,rotate=90] {
\begin{tabular}{c}
a
\end{tabular}
} (-2.5,6);

\draw (-2,8) -- node[sloped,right,rotate=90] {
\begin{tabular}{c}
b\\a
\end{tabular}
} (0,7.33333);
\draw (0,7.33333) -- node[sloped,right,rotate=90] {
\begin{tabular}{c}
a
\end{tabular}
} (1,7);

\draw (0,9) -- node[sloped,right,rotate=90] {
\begin{tabular}{c}
b\\a\\a
\end{tabular}
} (4.5,8);

\foreach \x/\y in {-0.5/0,-4/-1,-7.5/-2}{
\begin{scope}[xshift=\x cm,yshift=\y cm]
\draw (5,8) -- node[sloped,left,rotate=270] {
\begin{tabular}{c}
a
\end{tabular}
} (4.5,7);
\draw (4.5,7) -- node[sloped,left,rotate=270] {
\begin{tabular}{c}
a\\\#
\end{tabular}
} (4,5);
\draw (4.5,7) -- node[sloped,right,rotate=90] {
\begin{tabular}{c}
b\\a\\a\\a\\a\\\#
\end{tabular}
} (5,4);
\draw (5,8) -- node[sloped,right,rotate=90] {
\begin{tabular}{c}
b\\a\\a
\end{tabular}
} (6.5,7);
\draw (6.5,7) -- node[sloped,left,rotate=270] {
\begin{tabular}{c}
a\\b\\a\\a\\a\\a\\\#
\end{tabular}
} (6,4);
\draw (6.5,7) -- node[sloped,right,rotate=90] {
\begin{tabular}{c}
b\\a\\a\\a\\b\\a\\a\\a\\a\\\#
\end{tabular}
} (7,2);
\end{scope}
}

\foreach \x/\y in {0/9,6/7, 0.5/6,2.5/6, -3/5,-1/5,-2.5/6, -5/7,-4.5/5.5}{
    \filldraw (\x,\y) circle (0.04cm);
}
\foreach \x/\y in {4.5/8, 4/7,1/7, -2/8,-3/7.5}{
    \filldraw[white] (\x,\y) circle (0.1cm);
    \draw[blue] (\x,\y) circle (0.1cm);
    \filldraw[white] (\x,\y) circle (0.06cm);
    \filldraw (\x,\y) circle (0.04cm);
}
\foreach \x/\y in {0/7.33333,-2.83333/7,-2.66666/6.5}{
    \filldraw[white] (\x,\y) circle (0.1cm);
    \draw[blue] (\x,\y) circle (0.1cm);
    \filldraw[white] (\x,\y) circle (0.06cm);
}

\draw (6,7) node[above right] {\textbf{9,1}};
\draw (4,7) node[above left] {\textbf{8,2}};
\draw (4.5,8) node[above right] {\textbf{12,4}};
\draw (2.5,6) node[above right=-0.1cm] {\textbf{10,1}};
\draw (0.5,6) node[above left=-0.1cm] {\textbf{9,1}};
\draw (1,7) node[above right] {\textbf{14,2}};
\draw (0,7.33333) node[above] {\textbf{12,4}};
\draw (-1,5) node[above right=-0.1cm] {\textbf{11,1}};
\draw (-3,5) node[above left=-0.1cm] {\textbf{10,1}};
\draw (-2.5,6) node[left] {\textbf{15,1}};
\draw (-2.66666,6.5) node[right] {\textbf{14,2}};
\draw (-2.83333,7) node[right] {\textbf{12,4}};
\draw (-4.5,5.5) node[left] {\textbf{12,2}};
\draw (-5,7) node[above] {\textbf{10,3}};
\draw (-3,7.5) node[above left] {\textbf{14,5}};
\draw (-2,8) node[above left] {\textbf{14,14}};
\end{tikzpicture}
\caption{$\CST(T)$ for $T=\mathrm{aaabaabaabaaabaaaa\#}$. Black circles denote explicit nodes of $\ST(T)$ and blue circles represent nodes corresponding to halves of square substrings in $T$. The numbers next to nodes denote $\c(v)$ and $\n(v)$.}
\label{fig:CST}
\end{figure}
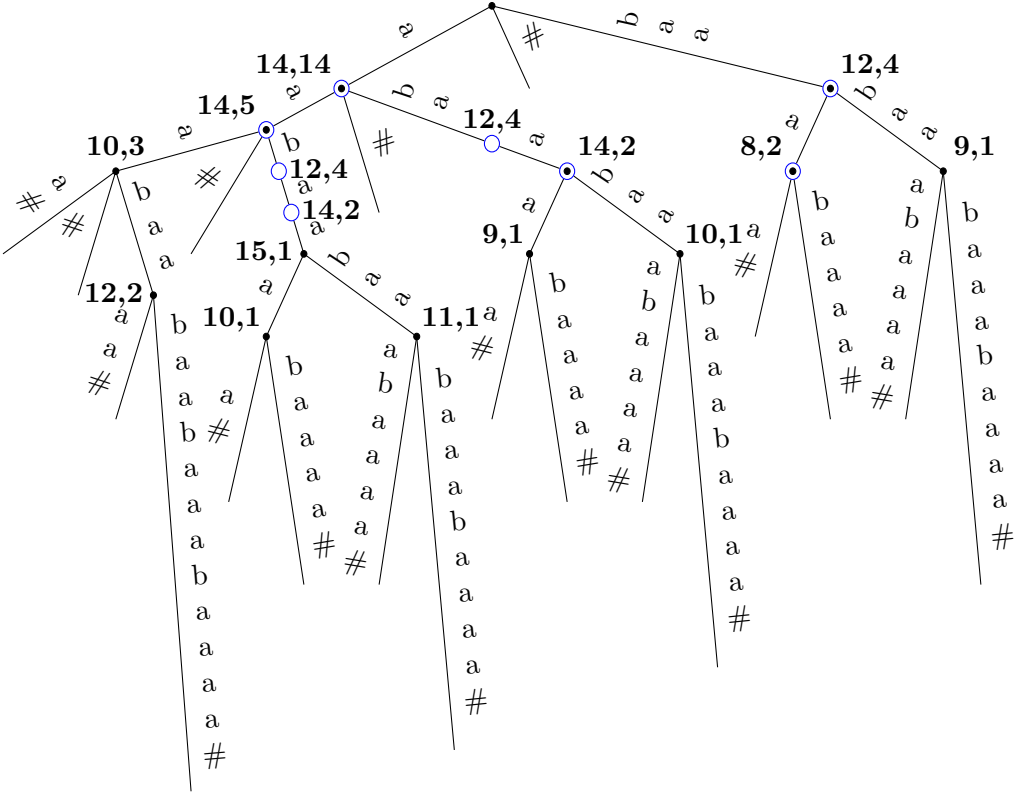
\renewcommand{\tabcolsep}{6pt}

\subsection{Cover Suffix Tree}\label{ss:CST}
In the suffix tree $\ST(T)$, there is a 1-to-1 correspondence between substrings of $T$ and (explicit and implicit) nodes. The same applies to $\CST(T)$. The set of explicit nodes of the $\CST(T)$ comprises of the explicit nodes of the suffix tree of $T$ and of nodes corresponding to halves of square substrings of $T$. Here a \emph{square} is a string of the form $X^2=XX$, for some string $X$ which is called the \emph{square half}.

The CST has the same tree structure as the Maximal Augmented Suffix Tree (MAST) introduced by Apostolico and Preparata for the String Statistics Problem~\cite{DBLP:journals/algorithmica/ApostolicoP96}. It was already observed by Brodal et al.~\cite{DBLP:conf/icalp/BrodalLOP02} that the MAST of $T$ uses only $\Oh(n)$ space; this is because the suffix tree has $\Oh(n)$ nodes \cite{DBLP:journals/jacm/McCreight76} and the number of different square substrings of $T$ is $\Oh(n)$~\cite{DBLP:journals/jct/FraenkelS98}. By a recent result of Brlek and Li~\cite{brlek2022number,DBLP:conf/cwords/BrlekL23} showing that a string of length $n$ contains at most $n$ different square substrings, it follows that the CST (and MAST) contains at most $3n$ explicit nodes.

For a node $v$ of $\CST(T)$, by $\bar{v}$ we denote the substring of $T$ that corresponds to $v$.
A \emph{consecutive occurrence} of string $S$ in $T$ is a pair of indices $(i,j)$ in $T$ such that $j>i$, $S$ has occurrences starting at positions $i$ and $j$ in $T$ and $S$ does not occur at any of the positions $i+1,\ldots,j-1$. A consecutive occurrence $(i,j)$ of $S$ is called \emph{overlapping} if $j < i+|S|$ and \emph{non-overlapping} otherwise.
In $\CST(T)$, each node $v$ is annotated by two values (see \cref{fig:CST}):
\begin{itemize}
    \item $\c(v)$, equal to the total number of positions in $T$ covered by occurrences of $\bar{v}$, and
    \item $\n(v)$, equal to one plus the number of non-overlapping consecutive occurrences of $\bar{v}$ in~$T$. (Intuitively, the one corresponds to the rightmost occurrence of $\bar{v}$ in $T$.)
\end{itemize}

The key property of values $\c(v)$ and $\n(v)$ is that if $u$ is an implicit node of $\CST(T)$ that is located on an edge from an explicit node $v$ to its parent, then $\c(u)$ can be expressed in terms of $\c(v)$ and $\n(v)$ as follows: $\c(u)=\c(v) - (|\bar{v}|-|\bar{u}|)\n(v)$; see \cref{fig:motivation}.

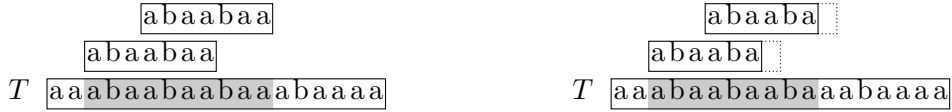
\begin{figure}[htpb]
\centering
\begin{tikzpicture}[xscale=0.25,yscale=0.5]
\draw (-1,0) node[above] {$T$};
\filldraw[white!80!black] (2.5,0) rectangle (12.5,0.8);
\draw (0.5,0) rectangle (18.5,0.8);
\foreach \x/\c in {1/a,2/a,3/a,4/b,5/a,6/a,7/b,8/a,9/a,10/b,11/a,12/a,13/a,14/b,15/a,16/a,17/a,18/a}{
    \draw (\x,0) node[above] {\c};
}
\foreach \dx/\dy in {2/1,5/2}{
    \foreach \x/\c in {1/a,2/b,3/a,4/a,5/b,6/a,7/a}{
        \draw[xshift=\dx cm,yshift=\dy cm] (\x,0) node[above] {\c};
    }
    \draw[xshift=\dx cm,yshift=\dy cm] (0.5,0) rectangle (7.5,0.8);
}
\begin{scope}[xshift=30cm]
\draw (-1,0) node[above] {$T$};
\filldraw[white!80!black] (2.5,0) rectangle (11.5,0.8);
\draw (0.5,0) rectangle (18.5,0.8);
\foreach \x/\c in {1/a,2/a,3/a,4/b,5/a,6/a,7/b,8/a,9/a,10/b,11/a,12/a,13/a,14/b,15/a,16/a,17/a,18/a}{
    \draw (\x,0) node[above] {\c};
}
\foreach \dx/\dy in {2/1,5/2}{
    \foreach \x/\c in {1/a,2/b,3/a,4/a,5/b,6/a}{
        \draw[xshift=\dx cm,yshift=\dy cm] (\x,0) node[above] {\c};
    }
    \draw[xshift=\dx cm,yshift=\dy cm] (0.5,0) rectangle (6.5,0.8);
    \draw[xshift=\dx cm,yshift=\dy cm,densely dotted] (6.5,0) -- (7.5,0) -- (7.5,0.8) -- (6.5,0.8);
}
\end{scope}
\end{tikzpicture}
\caption{Left: the locus $v$ of $C=\mathrm{abaabaa}$ in $\CST(T)$ is an explicit node and we have $\c(v)=10$, $\n(v)=1$ (see \cref{fig:CST}). Right: the locus $v'$ of the prefix $C'=\mathrm{abaaba}$ of $C$ is an implicit node one character above $v$. We then have $\c(v')=\c(v)-\n(v)=9$.}
\label{fig:motivation}
\end{figure}

We obtain the following result.

\begin{restatable}{thm}{thmCST}
\label{thm:CST}
The Cover Suffix Tree (CST) of a string of length $n$ over an integer alphabet can be constructed in $\Oh(n)$ time.
\end{restatable}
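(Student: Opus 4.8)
The plan is to reduce every ingredient of $\CST(T)$ — its tree shape and the values $\occ$, $\n$, $\c$ at the explicit nodes — to data that can be read off from the ordinary suffix tree $\ST(T)$ together with the set of runs of $T$. First I would construct $\ST(T)$ in $\Oh(n)$ time by Farach's algorithm and compute all runs of $T$ in $\Oh(n)$ time (Kolpakov and Kucherov, with the linear bound on the number of runs). Every square half of $T$ is a substring whose smallest period $p$ equals the period of a run that contains it, so the runs directly expose the $\Oh(n)$ distinct square halves; I would turn each into an explicit node by bucketing the halves per suffix‑tree edge and by string‑depth (both are integers in $[1,n]$, so a radix sort suffices) and splitting each edge in a single pass. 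A depth‑first traversal then gives $\occ(v)$, the number of leaves below $v$, for every explicit node. This fixes the tree structure of $\CST(T)$ and the value $\occ$ in linear time.

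It remains to compute $\n(v)$ and $\c(v)$. Let $\OOcc(v)$ denote the number of overlapping consecutive occurrences of $\bar v$ and let $G(v)$ denote the sum of their gaps $j-i$. A direct counting argument yields the two identities $\n(v)=\occ(v)-\OOcc(v)$ and $\c(v)=|\bar v|\cdot\n(v)+G(v)$: the second holds because each overlapping pair with gap $j-i$ removes exactly $|\bar v|-(j-i)$ covered positions from the $|\bar v|\cdot\occ(v)$ positions that would be counted without overlaps, and rearranging gives the stated form (one checks, e.g., on $\bar v=\mathrm{abaabaa}$ that $\occ=2$, $\OOcc=1$, $G=3$, so $\n=1$ and $\c=7\cdot 1+3=10$). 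Thus the whole problem collapses to computing $\OOcc(v)$ and $G(v)$ at every explicit node.

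The heart of the argument is the promised characterization via runs. I would show that every overlapping consecutive occurrence of $\bar v$ is an \emph{internal} pair $(i,i+p)$ of a run whose smallest period $p$ equals the gap, and conversely that a run of period $p$ and length $L$ contributes, to the substring of length $\ell$ sitting at a fixed offset $t$ of the run, exactly $\lfloor(L-\ell-t)/p\rfloor$ overlapping pairs, each of gap $p$. Fixing a run and an offset, these contributions trace, as $\ell$ grows, a single path of $\ST(T)$ and update it with a profile that is piecewise affine in the string‑depth while the gap weight $p$ stays constant; so in principle $\OOcc$ and $G$ are obtained by superimposing such path‑updates and then evaluating them at the explicit nodes in one bottom‑up scan using difference arrays over the tree.

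The main obstacle is keeping this aggregation in $\Oh(n)$. A naive realization iterates over all $p$ offsets of each run, and since $\sum_{R}p_R$ need not be linear this would only reproduce the previous $\Oh(n\log n)$ bound. The resolution is to bundle the offsets of a run and charge the work to squares rather than to offsets: the string‑depths at which an affine profile changes slope are precisely the square halves generated by the run, and the total number of distinct squares in $T$ is $\Oh(n)$. Hence each run's entire contribution can be encoded by $\Oh(1)$ difference updates placed at its generated square‑half breakpoints and resolved in a single linear traversal that tracks the count $\OOcc$ and the gap‑weighted count $G$ simultaneously. The delicate combinatorial core, which I expect to be the hardest part, is exactly this characterization: proving that the internal pairs of a run are genuinely \emph{consecutive} occurrences in $T$ (nothing of length $\ell$ occurs strictly between $i$ and $i+p$), that the smallest period of the run equals the gap, and that no overlapping pair is counted by two different runs. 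Granting it, every remaining step is a constant number of radix sorts, lowest‑common‑ancestor and weighted‑ancestor queries, and tree scans, for a total running time of $\Oh(n)$.
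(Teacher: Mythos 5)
Your overall strategy matches the paper's at a high level: build the tree structure from $\ST(T)$ plus the $\Oh(n)$ distinct square halves via weighted ancestor queries, reduce $\n(v)$ and $\c(v)$ to the number of overlapping consecutive occurrences and the sum of their gaps (your identity $\c(v)=|\bar v|\cdot\n(v)+G(v)$ is exactly the paper's decomposition into $\cvnov$ and $\cvov$), and characterize those occurrences as internal pairs of runs with gap equal to the run's period (the paper's \cref{lem:triangle}, proved via the synchronization property; you correctly identify this as the combinatorial core but leave it unproved, which is acceptable since it is a self-contained Fine--Wilf argument). The genuine gap is in the aggregation step. For a run $R=(a,b,p)$ of length $L$, the staircase $\lfloor(L-\ell-t)/p\rfloor$ changes value at the loci of the fragments $T[a+t\dd b-kp]$; over all offsets $t\in[0\dd p)$ these breakpoints are, up to matching, exactly the $L-2p+1$ fragments $T[i\dd b-p]$ for $i\in[a\dd b-2p]$ --- one per \emph{occurrence} of the primitively rooted square $T[i\dd i+2p)$, not one per distinct square. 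Summed over all runs this is $\Theta(n\log n)$ in the worst case (e.g.\ Fibonacci strings), so the claim that ``each run's entire contribution can be encoded by $\Oh(1)$ difference updates placed at its generated square-half breakpoints'' fails on both counts: these breakpoints have length greater than $p$ and are not in general square halves, and there are $L-2p+1$ of them per run, not $\Oh(1)$. Charging to distinct squares cannot repair this, because the quantity you must materialize is per occurrence, not per distinct substring; as written, your scheme only recovers the known $\Oh(n\log n)$ bound that you set out to beat.

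The paper closes precisely this gap with a second layer of difference counters over two auxiliary structures. The multiset of bottom breakpoints $\Lower(R)=\{T[i\dd b-p]:i\in[a\dd b-2p]\}$ forms a contiguous path in the suffix-link tree $\ST'(T)$ (consecutive elements differ by one suffix link and their loci are explicit), so the entire batch of $+1$ updates for a run costs $\Oh(1)$ work plus a single global bottom-up accumulation over $\ST'(T)$; dually, the multiset of top breakpoints $\Upper(R)$ (the square halves of length exactly $p$) forms a walk in a rotation graph on $\Sq(T)$ whose components are simple paths and cycles, again admitting $\Oh(1)$ updates per run after some bookkeeping for walks that wind around a cycle. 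A node's value $\o(v)$ is then a subtree sum of $C_{lower}-C_{upper}$ in $\CST(T)$, and the gap-weighted version for $\cvov$ is obtained by replacing $\pm1$ with $\pm\per(R)$. Without an observation of this kind --- that the per-occurrence breakpoints organize themselves into $\Oh(1)$ intervals per run in \emph{some} linear-size structure --- your proposal does not achieve linear time; the remaining steps (tree structure, the coverage identity, the final subtree-sum evaluation) are correct and coincide with the paper.
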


The $\Oh(n \log n)$-time algorithm from \cite{DBLP:journals/algorithmica/KociumakaPRRW15} for constructing $\CST(T)$ processes the suffix tree of $T$ bottom-up, storing for each explicit node $v$ the set of occurrences of the corresponding substring of $T$ in an AVL tree. Its time complexity follows by using an efficient algorithm for merging AVL trees~\cite{DBLP:journals/jacm/BrownT79} (cf.~\cite{DBLP:conf/cpm/BrodalP00}). We use a completely different approach, based on a new combinatorial observation that links overlapping consecutive occurrences of substrings of $T$ to runs in $T$~\cite{DBLP:conf/focs/KolpakovK99}. We show that the (multi)set of substrings whose overlapping consecutive occurrences are implied by a run has a simple, ``triangular'' structure. The values $\o(v)$ and $\c(v)$ for explicit nodes $v$ of $\CST(v)$ are then computed in two bottom-up traversals, one in the tree of suffix links of $\ST(T)$ and the other in the $\CST(T)$.

\subsection{Applications of CST to quasiperiodicity}\label{ss:app1}
\cref{thm:CST} has several applications in the field of quasiperiodicity~\cite{DBLP:journals/tcs/ApostolicoE93}. Basic notions of quasiperiodicity are covers~\cite{DBLP:journals/ipl/ApostolicoFI91}, that were already mentioned before, and seeds~\cite{DBLP:journals/algorithmica/IliopoulosMP96}. A string $C$ is a cover of a string $T$ if each position of $T$ is inside at least one occurrence of $C$. A string $S$ is a seed of a string $T$ if it is a cover of a superstring of $T$. In other words, all positions of $T$ are covered by occurrences and overhangs of $S$, where an \emph{overhang} is a prefix of $T$ being a suffix of $S$ or a suffix of $T$ being a prefix of $S$.
A substring $C$ of $T$ is called an \emph{$\alpha$-partial cover} of $T$ if the occurrences of $C$ in $T$ cover at least $\alpha$ positions in $T$. A substring $S$ of $T$ is called an \emph{$\alpha$-partial seed} of $T$ if the occurrences and overhangs of $S$ in $T$ cover at least $\alpha$ positions in $T$.
See \cref{fig:partial} for examples.

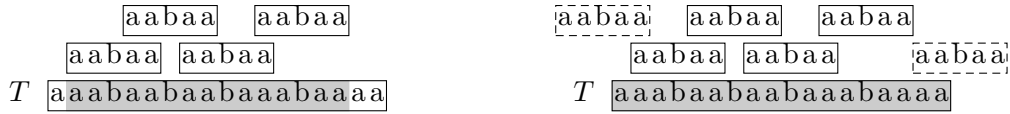
\begin{figure}[htpb]
\centering
\begin{tikzpicture}[xscale=0.25,yscale=0.5]
\draw (-1,0) node[above] {$T$};
\filldraw[white!80!black] (1.5,0) rectangle (16.5,0.8);
\draw (0.5,0) rectangle (18.5,0.8);
\foreach \x/\c in {1/a,2/a,3/a,4/b,5/a,6/a,7/b,8/a,9/a,10/b,11/a,12/a,13/a,14/b,15/a,16/a,17/a,18/a}{
    \draw (\x,0) node[above] {\c};
}
\foreach \dx/\dy in {1/1,4/2,7/1,11/2}{
    \foreach \x/\c in {1/a,2/a,3/b,4/a,5/a}{
        \draw[xshift=\dx cm,yshift=\dy cm] (\x,0) node[above] {\c};
    }
    \draw[xshift=\dx cm,yshift=\dy cm] (0.5,0) rectangle (5.5,0.8);
}
\begin{scope}[xshift=30cm]
\draw (-1,0) node[above] {$T$};
\filldraw[white!80!black] (0.5,0) rectangle (18.5,0.8);
\draw (0.5,0) rectangle (18.5,0.8);
\foreach \x/\c in {1/a,2/a,3/a,4/b,5/a,6/a,7/b,8/a,9/a,10/b,11/a,12/a,13/a,14/b,15/a,16/a,17/a,18/a}{
    \draw (\x,0) node[above] {\c};
}
\foreach \dx/\dy in {1/1,4/2,7/1,11/2}{
    \foreach \x/\c in {1/a,2/a,3/b,4/a,5/a}{
        \draw[xshift=\dx cm,yshift=\dy cm] (\x,0) node[above] {\c};
    }
    \draw[xshift=\dx cm,yshift=\dy cm] (0.5,0) rectangle (5.5,0.8);
}
\foreach \dx/\dy in {-3/2,16/1}{
    \foreach \x/\c in {1/a,2/a,3/b,4/a,5/a}{
        \draw[xshift=\dx cm,yshift=\dy cm] (\x,0) node[above] {\c};
    }
    \draw[xshift=\dx cm,yshift=\dy cm,densely dashed] (0.5,0) rectangle (5.5,0.8);
}
\end{scope}
\end{tikzpicture}
\caption{$C=\mathrm{aabaa}$ is a 15-partial cover of $T$ (left). Indeed, the node $v$ of $\CST(T)$ corresponding to $C$ has $\c(v)=15$ (\cref{fig:CST}). $C$ is also an 18-partial seed of $T$, hence, a seed of $T$ (right).}
\label{fig:partial}
\end{figure}

In \cite{DBLP:journals/algorithmica/KociumakaPRRW15,DBLP:journals/tcs/KociumakaPRRW18a} it was noticed that given the $\CST(T)$, all shortest $\alpha$-partial covers and $\alpha$-partial seeds of $T$ for a specified value of the parameter $\alpha$ can be computed in $\Oh(n)$ time. Thus our \cref{thm:CST} implies the following result.

\begin{corollary}
\label{cor:PCPS}
Let $T$ be a string of length $n$ over an integer alphabet and $\alpha \in [1 \dd n]$. All shortest $\alpha$-partial covers and seeds of $T$ can be computed in $\Oh(n)$ time.
\end{corollary}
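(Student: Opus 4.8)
The plan is to combine our main construction with a known post-processing step. By \cref{thm:CST} we build $\CST(T)$ in $\Oh(n)$ time, so it suffices to recall why $\CST(T)$ lets one extract all shortest $\alpha$-partial covers and $\alpha$-partial seeds in additional $\Oh(n)$ time, as established in \cite{DBLP:journals/algorithmica/KociumakaPRRW15,DBLP:journals/tcs/KociumakaPRRW18a}. I sketch this reduction below to confirm the linear bound; the composition of the two $\Oh(n)$-time steps then yields the corollary.

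For partial covers, the engine is the coverage formula recalled above: for an implicit node $u$ on the edge from an explicit node $v$ to its parent, $\c(u) = \c(v) - (|\bar v| - |\bar u|)\,\n(v)$. Since $\n(v) \ge 1$ by definition, along each edge the value $\c$ is a strictly increasing linear function of the depth $|\bar u|$, with the occurrence set fixed throughout the edge. Hence, for each of the $\Oh(n)$ explicit edges of $\CST(T)$, the condition $\c(u) \ge \alpha$ holds for exactly the nodes of depth at least some threshold, and this threshold is obtained from the linear equation in $\Oh(1)$ time. The shortest $\alpha$-partial cover whose locus lies on that edge is the node at this threshold depth (or the shallowest node of the edge, if the threshold falls below the edge; the edge contributes nothing if even its deepest node has coverage below $\alpha$). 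Since every substring of $T$ is the locus on a unique edge, taking the minimum threshold over all edges yields the length $\ell^\star$ of a shortest $\alpha$-partial cover, and a second scan reports every edge attaining depth $\ell^\star$ with sufficient coverage, i.e., all shortest $\alpha$-partial covers, in $\Oh(n)$ total time.

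For partial seeds one must additionally account for overhangs, i.e., for positions of a prefix or a suffix of $T$ covered by a fractional occurrence of $S=\bar u$ that runs off either end of $T$. The extra coverage contributed by overhangs again varies in a controlled, piecewise-linear manner with $|\bar u|$ along each edge, and it can be tabulated from $\CST(T)$ together with standard border and longest-prefix information in $\Oh(n)$ preprocessing; this is exactly the machinery developed in \cite{DBLP:journals/tcs/KociumakaPRRW18a}. Once the total coverage is expressed edge by edge as a piecewise-linear function of depth, the same thresholding-and-minimization scan finds $\ell^\star$ and reports all shortest $\alpha$-partial seeds in $\Oh(n)$ time.

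The main point to stress is where the difficulty truly lies: the only genuinely new ingredient is \cref{thm:CST} itself, and the extraction of partial covers and seeds is inherited essentially verbatim from \cite{DBLP:journals/algorithmica/KociumakaPRRW15,DBLP:journals/tcs/KociumakaPRRW18a}. I would therefore expect no serious obstacle in the corollary itself; the mild technical care needed is in the seed case (handling overhangs through the auxiliary $\Oh(n)$ tabulation) and in reporting \emph{all} shortest loci rather than a single one, both of which stay within $\Oh(n)$ by the piecewise-linear structure along edges.
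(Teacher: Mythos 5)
Your proposal is correct and matches the paper's route exactly: the paper derives \cref{cor:PCPS} by composing \cref{thm:CST} with the known $\Oh(n)$-time extraction of shortest $\alpha$-partial covers and seeds from $\CST(T)$ in \cite{DBLP:journals/algorithmica/KociumakaPRRW15,DBLP:journals/tcs/KociumakaPRRW18a}, offering no further argument. Your additional sketch of the edge-by-edge thresholding via $\c(u)=\c(v)-(|\bar{v}|-|\bar{u}|)\n(v)$ is a faithful (and correct) recap of that cited machinery rather than a new approach.
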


In \cite{DBLP:journals/algorithmica/KociumakaPRRW15} also the following problem was considered.

\defproblem{AllPartialCovers}{a string $T$ of length $n$}{for all $\alpha = 1,\ldots,n$, a shortest $\alpha$-partial cover of $T$}

\begin{example}
For $T$ from \cref{fig:CST}, the solution to \AllPartialCovers problem can be as follows: substring $C=\mathrm{a}$ for $\alpha \le 14$, substring $C=\mathrm{aabaa}$ for $\alpha=15$ (see \cref{fig:partial}), and any length-$\alpha$ substring of $T$ for $\alpha \ge 16$.
\end{example}

\noindent
An $\Oh(n \log n)$-time solution for \AllPartialCovers based on $\CST(T)$ and on computing the upper envelope \cite{DBLP:journals/ipl/Hershberger89} of $\Oh(n)$ line segments was presented in \cite{DBLP:journals/algorithmica/KociumakaPRRW15}. We obtain the following result.

\begin{restatable}{thm}{thmPC}
\label{thm:PC}
\AllPartialCovers problem can be solved in $\Oh(n)$ time for a length-$n$ string over an integer alphabet.
\end{restatable}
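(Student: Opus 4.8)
The plan is to reduce \AllPartialCovers to reading off a single monotone function from $\CST(T)$ and then to compute that function in linear time. First I would invoke \cref{thm:CST} to build $\CST(T)$ together with the annotations $\c$ and $\n$ in $\Oh(n)$ time. The guiding observation is the edge formula $\c(u)=\c(v)-(|\bar v|-|\bar u|)\,\n(v)$: on the edge from an explicit node $v$ to its parent, the coverage of the implicit node of length $\ell$ equals $L_e(\ell):=\c(v)-(|\bar v|-\ell)\,\n(v)$, the value of a line of integer slope $\n(v)\ge 1$. Thus each of the $\Oh(n)$ edges $e$ contributes exactly one line segment over an integer interval $[l_e,r_e]$, where $r_e=|\bar v|$ and $l_e$ is one more than the parent's string-depth, and $L_e$ is strictly increasing in $\ell$.

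Next I would recast the output. A shortest $\alpha$-partial cover is a shortest substring whose occurrences cover at least $\alpha$ positions, so its length is $G(\alpha):=\min\{\ell:\exists\,u,\ |\bar u|\le\ell,\ \c(u)\ge\alpha\}$. Introducing the running maximum $M(\ell):=\max\{\c(u):|\bar u|\le\ell\}$, the function $M$ is non-decreasing and reaches $n$ (the whole string covers everything), and $G$ is precisely its discrete inverse, $G(\alpha)=\min\{\ell:M(\ell)\ge\alpha\}$. Hence, once $M$ is known at all integer lengths, the whole answer is filled in $\Oh(n)$: for every $\ell$ I assign the length $\ell$ as the answer for all thresholds $\alpha\in(M(\ell-1),M(\ell)]$ (with $M(0):=0$). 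A short argument shows that any node $u$ realizing $M(G(\alpha))$ must have $|\bar u|=G(\alpha)$ and $\c(u)\ge\alpha$, so the substring at that locus is a valid witness; recording a realizing locus while computing $M$ supplies all witnesses for free. The entire problem therefore reduces to computing $M(\ell)$ for all $\ell\in[1,n]$ in $\Oh(n)$ time.

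It remains to compute $M=\max(M_1,M_2)$, where $M_1(\ell)=\max_{|\bar v|\le\ell}\c(v)$ ranges over explicit nodes and $M_2(\ell)=\max\{L_e(\ell):l_e\le\ell\le r_e\}$ ranges over the segments crossing length $\ell$. The term $M_1$ is a running maximum over explicit-node coverages, obtained in $\Oh(n)$ by bucketing nodes by depth. The hard part is $M_2$, which is exactly the upper-envelope-of-$\Oh(n)$-segments computation that the earlier $\Oh(n\log n)$ algorithm carried out; viewed as abstract segments these inputs admit only a super-linear Davenport--Schinzel envelope size and no linear-time routine. To reach $\Oh(n)$ I would avoid the full envelope and exploit two special features of \emph{our} segments: all coordinates and slopes are integers in $[1,n]$, so the segments radix-sort in $\Oh(n)$; and the intervals $[l_e,r_e]$ are not arbitrary but tile $[1,\mathrm{depth}]$ along every root-to-leaf path of $\CST(T)$, with $L_e(r_e)=\c(v_e)$ coupling a segment's top to the explicit node that continues it. I would then sweep $\ell$ from $1$ to $n$, inserting a segment when $\ell=l_e$ and, when $\ell$ passes $r_e$, retiring it into the constant pool already counted by $M_1$; since $M$ only ever increases, a retirement never forces a decrease, and I maintain the climbing maximum with a monotonic stack of candidate lines. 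The analysis must show, using the tree provenance (consecutive/nested intervals and positive slopes), that each segment is pushed and popped $\Oh(1)$ times and that $M$ has only $\Oh(n)$ breakpoints, yielding total $\Oh(n)$ work.

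The main obstacle is precisely this last step: converting the $\Oh(n\log n)$ envelope into a linear-time computation. The difficulty is that, as abstract line segments, our inputs support only the super-linear Davenport--Schinzel bound, so any linear-time argument has to genuinely use that the segments come from $\CST(T)$ edges — their integer slopes, the tiling of string-depths along tree paths, and the coupling $L_e(r_e)=\c(v_e)$ between a segment and the explicit node continuing it — in order to certify that the monotonic-stack sweep performs only amortized constant work per edge.
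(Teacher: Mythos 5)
There is a genuine gap, and you have in fact flagged it yourself: the entire proof hinges on computing, in $\Oh(n)$ total time, the running maximum $M_2$ of the $\Oh(n)$ increasing line segments contributed by the edges of $\CST(T)$, and that step is never established. The ``monotonic stack of candidate lines'' is only a hope: to advance the sweep from $\ell$ to $\ell+1$ you must detect whether any currently active segment (each growing at its own integer rate $\n(v_e)$) overtakes the current maximum, which is a kinetic-maximum problem; nothing in the proposal shows that the tree provenance of the intervals or the coupling $L_e(r_e)=\c(v_e)$ forces only amortized $\Oh(1)$ work per edge. The claims that ``each segment is pushed and popped $\Oh(1)$ times'' and that the sweep certifies the maximum correctly are exactly the content that would constitute the proof, and they are stated as obligations rather than proved. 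As written, the argument only re-derives the known $\Oh(n\log n)$ envelope approach plus an unproven conjecture that it can be linearized.

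The paper avoids this obstacle entirely with a combinatorial observation that makes the envelope irrelevant (\cref{lem:explicit}): for every substring $C$ of $T$ there is a substring $C'$ with $|C'|=|C|$ and $\c(C')\ge\c(C)$ that is either branching or a suffix of $T$. The proof is a short exchange argument --- if $C$ is neither, all its occurrences are followed by the same character $a$, so replacing $C=cX$ by $Xa$ preserves the length and cannot decrease the coverage; iterating terminates because the rightmost occurrence moves strictly right. Consequently one never needs the values $\c(u)$ at implicit nodes (i.e.\ the interiors of your segments) at all: it suffices to bucket the $\Oh(n)$ branching nodes by $\c(v)$, seed the buckets with the $n$ suffixes, and do one backward sweep (\cref{algo:APC}). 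You would do well to look for such a ``witness normalization'' step before committing to a computational-geometry attack; here it converts a plausible-but-unproven $\Oh(n)$ envelope computation into a trivially linear bucketing.
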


A linear-time algorithm computing all covers of a string was presented almost 30 years ago by Moore and Smyth~\cite{DBLP:conf/soda/MooreS94,DBLP:journals/ipl/MooreS95}. A rather involved linear-time algorithm computing a representation of all seeds in a string over an integer alphabet was given much more recently by Kociumaka et al.~\cite{DBLP:conf/soda/KociumakaKRRW12}. The representation (already introduced in the earlier, $\Oh(n \log n)$-time algorithm by Iliopoulos et al.~\cite{DBLP:journals/algorithmica/IliopoulosMP96}) consists of a set of paths in the suffix trees of $T$ and of $T$ reversed, at most one path on each edge of the suffix trees. Seeds of $T$ are exactly $|T|$-partial seeds of $T$, so \cref{cor:PCPS} immediately implies an alternative linear-time algorithm computing all shortest seeds of $T$. Moreover, in \cite[Theorem 3]{DBLP:journals/algorithmica/KociumakaPRRW15} it was observed that having $\CST(T)$, the aforementioned representation of all seeds in $T$ can be computed in $\Oh(n)$ time. Thus \cref{thm:CST} yields an alternative $\Oh(n)$-time algorithm computing the same representation of all seeds in $T$ as in~\cite{DBLP:conf/soda/KociumakaKRRW12}. The resulting algorithm is substantially different from the algorithm of~\cite{DBLP:conf/soda/KociumakaKRRW12}; in particular, it is non-recursive and arguably simpler.

Recently, Kociumaka et al.~\cite{DBLP:journals/talg/KociumakaKRRW20} showed that there exists a different representation of all seeds of a string, consisting of $\Oh(n)$ disjoint paths on just the suffix tree of $T$, and that this representation can be computed in $\Oh(n)$ time assuming an integer alphabet. This representation, however, no longer satisfies the convenient property that at most one path on each edge of the suffix tree is in the representation (see \cite[Fig.\ 2]{DBLP:journals/talg/KociumakaKRRW20} for an example).

\subsection{Reporting overlapping occurrences}\label{ss:app2}
Data stored in the CST can be used to compute, for any substring $S$ of $T$, the number $\o(S)$ of overlapping consecutive occurrences of $S$ in $T$. We show that the technique behind \cref{thm:CST} can be further exploited to obtain a linear-space index for reporting overlapping consecutive occurrences of a query pattern.

Navarro and Thankachan \cite{DBLP:journals/tcs/NavarroT16} proposed an index that, given a length-$m$ substring $S$ of $T$ and an interval $[\alpha,\beta]$, reports all consecutive occurrences $(i,j)$ of $S$ such that $\alpha \le j-i \le \beta$ in $\Oh(m+\output)$ time, where $\output$ is the number of consecutive occurrences reported. The size of their index for a text $T$ of length $n$ is $\Oh(n \log n)$. We solve the following problem.

\defdsproblem{Reporting bounded-gap overlapping consecutive occurrences}{A string $T$ of length $n$}{A substring $S$ of $T$, $|S|=m$, and a positive integer $\beta$ such that $\beta < m$}{All consecutive occurrences $(i,j)$ of $S$ in $T$ such that $j-i \le \beta$}

\begin{restatable}{thm}{thmCons}
\label{thm:cons}
There is an index of size $\Oh(n)$ that reports all bounded-gap overlapping consecutive occurrences of a length-$m$ pattern in $\Oh(m+\output)$ time, where $\output$ is the number of consecutive occurrences reported. If $T$ is over a constant-sized alphabet, the index can be constructed in $\Oh(n)$ time. The construction time becomes expected if $T$ is over an integer alphabet.
\end{restatable}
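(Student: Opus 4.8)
The plan is to reduce the query to the set of runs of $T$, using the same characterization of overlapping consecutive occurrences that underlies \cref{thm:CST}. Recall that this characterization says that each overlapping consecutive occurrence $(i,j)$ of a substring $S$ (so $j-i<|S|$) has $p:=j-i$ equal to a period of $S$, and that $i$ and $j$ lie in a common run of $T$ of period exactly $p$; conversely, inside a run $\rho=(s,e,p)$ with $p<|S|$ every pair $(x,x+p)$ with $s\le x$ and $x+p+|S|-1\le e$ is an overlapping consecutive occurrence of $T[x\dd x+|S|-1]$. Thus the occurrences of $S$ produced by a single run form one arithmetic progression $x_0,x_0+p,\ldots,x_0+(k-1)p$ of common difference $p$, and the overlapping consecutive occurrences it contributes are exactly the $k-1$ consecutive pairs of this progression. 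The index will therefore consist of: (i) the suffix tree $\ST(T)$, used to locate the locus $v$ of an explicitly given pattern $S$ of length $m$ in $\Oh(m)$ time; (ii) the $\Oh(n)$ runs of $T$ (Kolpakov--Kucherov); and (iii) an auxiliary structure that, from $v$ and the threshold $\beta$, retrieves the contributing runs, after which each progression is emitted directly.

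The reporting step is output-sensitive for a clean reason. A run $\rho=(s,e,p)$ contributes to $S$ exactly when $p\le\beta$, $p$ is a period of $S$, the phase of $S$ matches that of $\rho$, and $|\rho|\ge m+p$ (so that at least two occurrences fit); in that case $x_0$ and $k$ are computed in $\Oh(1)$ from $(s,e,p)$ and the offset of $S$ inside $\rho$, and the $k-1$ pairs are listed in $\Oh(k)$ time. Because every contributing run has $k\ge2$ and hence emits at least one pair, the number of contributing runs is at most $\output$; consequently we may spend $\Oh(1)$ of overhead per contributing run and still achieve total query time $\Oh(m+\output)$, provided we never touch a run that contributes nothing. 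By the characterization the pairs emitted this way are precisely the consecutive occurrences $(i,j)$ of $S$ with $j-i\le\beta$, which gives correctness.

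The main obstacle is step (iii): given $v$, enumerate exactly the contributing runs without inspecting the (possibly far more numerous) $\Oh(\occ)$ occurrences of $S$ and without visiting runs that contribute nothing. The plan is to exploit the same ``triangular'' structure used for \cref{thm:CST}. A run of period $p$ contributes to $S$ only when $S$ is a factor of the run aligned to its period, so for a fixed phase the contributing patterns lie on a single root-to-node path of $\ST(T)$ — the path spelling the periodic continuation of that phase — and occupy on it a depth interval of the form $(p,\,|\rho|-p-\phi]$. I would attach each run to only $\Oh(1)$ canonical loci of $\ST(T)$, determined by its Lyndon root and its longest contributing pattern, keeping the attached data $\Oh(n)$ in total, and group runs by Lyndon root so that from $v$ (whose periods and Lyndon roots are read off in $\Oh(1)$ from precomputed shortest-period information) one jumps directly to the relevant groups and then gathers the contributing runs by a bounded downward traversal along the periodic extensions of $\bar v$. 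Designing this so that the traversal visits each contributing run exactly once and skips every empty run is the crux; it relies on the fact that distinct runs — whether of the same or of different periods $p\le\beta$ — each inject at least one pair into $\output$, which is what keeps the traversal within $\Oh(\output)$.

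Finally, for the construction bounds: $\ST(T)$ and the set of runs are both computed in deterministic $\Oh(n)$ time over an integer alphabet, and the attachment of runs to $\ST(T)$ in step (iii) is a further $\Oh(n)$-time preprocessing (using, e.g., weighted-ancestor queries to find the canonical loci). The only randomised ingredient is the dictionary that lets us descend from the root to the locus $v$ of an explicitly given length-$m$ pattern in $\Oh(m)$ worst-case time: over a constant-sized alphabet the children of each node are kept in a size-$\Oh(1)$ array, giving a fully deterministic $\Oh(n)$-time construction, whereas over an integer alphabet we equip each node with a static perfect-hash table on the first characters of its outgoing edges, built in expected $\Oh(n)$ time in total. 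This confines the randomisation to the construction and explains why the construction time becomes expected for integer alphabets while the query time stays $\Oh(m+\output)$.
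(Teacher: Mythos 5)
Your overall framework matches the paper's: you invoke the same run-based characterization of overlapping consecutive occurrences (\cref{lem:triangle}), you charge the enumeration cost to the output by noting that every contributing run emits at least one pair, and you isolate the randomisation to perfect hashing of children exactly as the paper does. However, the heart of the proof --- your step (iii), retrieving exactly the contributing runs from the locus of $S$ without touching any non-contributing run --- is not actually designed; you yourself label it ``the crux'' and offer only a sketch (``canonical loci determined by the Lyndon root'', ``a bounded downward traversal along the periodic extensions of $\bar v$''). This sketch has concrete problems: the runs whose triangles contain occurrences of $S$ need not share a single period or phase (for example $S=\mathrm{aabaa}$ can receive contributions from runs of period $3$ and from runs of period $4$ in the same text), so there is no single ``periodic extension of $\bar{v}$'' to walk down, and the number of relevant periods of $S$ is not $\Oh(1)$; moreover, attaching each run to $\Oh(1)$ canonical loci does not by itself give a way to collect, for a query node, all runs attached at the many relevant loci in time proportional to their number. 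Without a worked-out mechanism here, the claimed $\Oh(m+\output)$ bound is unsupported.

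The paper's mechanism, which you would need to supply, is a two-level range-minimum-query scheme. It rests on two structural facts you do not use: the loci of the fragments in $\Lower(R)$ are \emph{explicit} nodes of $\ST(T)$ (\cref{obs:lower_upper}), and $\Lower(R)$ forms a path in the suffix-link tree $\ST'(T)$ (\cref{obs:lower}). A run $R$ with $\per(R)\le\beta$ contributes to $S$ if and only if some explicit descendant $v$ of the locus of $S$ in $\ST(T)$ matches a fragment of $\Lower(R)$; one therefore stores at each explicit node the minimum period of a run whose $\Lower$ set it hits, arranges these values in pre-order arrays over $\ST(T)$ and over $\ST'(T)$, and repeatedly queries range minima to extract exactly the nodes with value at most $\beta$ in the relevant subtree --- first the ``lower'' nodes $v$ below the locus of $S$ in $\ST(T)$, then for each such $v$ the ``bottom'' nodes $w$ below it in $\ST'(T)$, at which the runs themselves are stored sorted by period. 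Each RMQ costs $\Oh(1)$ and each extracted node is guaranteed to yield at least one reported pair, which is what makes the whole query $\Oh(m+\output)$; a separate argument (also absent from your proposal) shows that no pair is reported twice. This is the content your proposal is missing.
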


The data structure of \cref{thm:cons} is superior to the data structure of~\cite{DBLP:journals/tcs/NavarroT16} if $\alpha=0$ and $\beta < m$.

In the data structure we use the same combinatorial observation as in \cref{thm:CST}. With it, a query for a pattern $S$ consists in finding the corresponding node $v$ in $\CST(T)$ and reporting all ``triangular'' structures implied by runs that contain $v$. To this end, range minimum query data structures~\cite{DBLP:conf/latin/BenderF00} are used to store the ``bottom sides'' of the triangles. Thanks to this, it is actually sufficient to store $\ST(T)$ instead of the $\CST(T)$.
The expected time in the construction algorithm stems from using perfect hashing~\cite{DBLP:journals/jacm/FredmanKS84} to store children of a node of the suffix tree if $T$ is over a superconstant alphabet.

\subsection{Structure of the paper}
We start by recalling basic definitions related to strings and compact tries (including suffix trees). In \cref{sec:CST} we present the proof of the main \cref{thm:CST}. Solution to \AllPartialCovers (\cref{thm:PC}) is provided in \cref{sec:APC}. The data structure for reporting bounded-gap overlapping consecutive occurrences (proof of \cref{thm:cons}) is presented in \cref{app:cons}. We conclude in \cref{sec:concl}.

\section{Preliminaries}
\subsection{Strings}
By $\Sigma$ we denote the finite alphabet of all the considered strings. We assume that characters of a string $S$ are numbered 1 through $|S|$, with $S[i] \in \Sigma$ denoting the $i$th character. An integer $j \in [1 \dd |S|]$ is called an index in $S$. A string $S[i] S[i+1] \cdots S[j]$ for any indices $i,j$ such that $i \le j$ is called a \emph{substring} of $S$. By $S[i\dd j]$ we denote a \emph{fragment} of $S$ that can be viewed as a positioned substring $S[i] S[i+1] \cdots S[j]$ (formally, it is represented in $\Oh(1)$ space with a reference to $S$ and the interval $[i \dd j]$). We also denote $S[i \dd j-1]$ as $S[i\dd j)$. Two fragments $S[i \dd j]$ and $S[i' \dd j']$ \emph{match} (notation: $S[i \dd j]=S[i' \dd j']$) if the underlying substrings are the same. Similarly we define matching of a fragment and a substring. Two fragments $S[i \dd j]$ and $S[i' \dd j']$ are \emph{equivalent} (notation: $S[i \dd j]\equiv S[i' \dd j']$) if $i=i'$ and $j=j'$. A string $U$ is called a prefix (suffix) of a string $S$ if $U=S[1 \dd |U|]$ ($U=S[|S|-|U|+1 \dd |S|]$, respectively) and a \emph{border} of $S$ if it is both a prefix and a suffix of $S$.

Henceforth by $T$ we denote the text string and by $n$ we denote $|T|$.
We say that a string~$S$ occurs in the text $T$ at position $i$ if $S=T[i \dd i+|S|)$.
A pair of indices $(i,j)$ in~$T$ is called a \emph{consecutive occurrence of substring $S$} if $i<j$, $T[i \dd i+|S|)=T[j \dd j+|S|)$ and $T[k \dd k+|S|) \ne S$ for all $k \in (i \dd j)$. A consecutive occurrence is called \emph{overlapping} if $j<i+|S|$ and otherwise it is called \emph{non-overlapping}. By $\OOcc(S)$ we denote the set of overlapping consecutive occurrences of $S$ in $T$.

For a string $U$ and $d \in \mathbb{Z}_{\ge 0}$, by $U^d$ we denote the $d$th power of $U$, equal to a concatenation of $d$ copies of $U$.
A string $V$ is \emph{primitive} if $V=U^d$ for $d \in \mathbb{Z}_+$ implies that $d=1$.
The following property of primitive strings is a known consequence of Fine and Wilf's lemma~\cite{fine1965uniqueness}.

\begin{lemma}[Synchronization property, see \cite{DBLP:books/daglib/0020103}]\label{lem:synch}
A string $V$ is primitive if and only if $V$ has exactly two occurrences in $V^2$.
\end{lemma}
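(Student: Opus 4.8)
The plan is to reduce the statement to counting ``interior'' occurrences and then treat the two directions separately. Write $p=|V|$ and index the positions of $V^2$ from $1$ to $2p$. The copies of $V$ at positions $1$ and $p+1$ are present for every $V$, and any occurrence must start at a position $i$ with $i+p-1 \le 2p$, i.e. $i \in [1\dd p+1]$. Hence ``$V$ has exactly two occurrences in $V^2$'' is equivalent to saying that $V$ has no occurrence starting at a position $d+1$ with $0<d<p$. I call such an occurrence \emph{interior}, and the whole lemma then amounts to showing that $V$ admits an interior occurrence if and only if $V$ is not primitive.

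For the easy direction, suppose $V=U^k$ with $k \ge 2$ and $q=|U|=p/k<p$. Then $V^2=U^{2k}$, so $V=U^k$ occurs at position $q+1$, and $0<q<p$ makes this an interior occurrence (in fact $V$ then occurs at all positions $jq+1$ for $0 \le j \le k$, giving $k+1 \ge 3$ occurrences). This shows that primitivity is necessary for having exactly two occurrences.

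For the converse I would start from an interior occurrence at position $d+1$ with $0<d<p$ and deduce that $V$ is a proper power. Comparing this occurrence with the one at position $1$ inside $V^2$, the part of the match with $d+k \le p$ gives $V[k]=V[k+d]$ for $1 \le k \le p-d$, that is, $V$ has period $d$; the wrap-around part with $d+k>p$ gives $V[k']=V[k'+(p-d)]$ for $1 \le k' \le d$, so $V$ also has period $p-d$. Since $d+(p-d)-\gcd(d,p-d)=p-\gcd(d,p) \le p=|V|$, Fine and Wilf's lemma applies to the two periods $d$ and $p-d$ and yields that $V$ has period $g=\gcd(d,p-d)=\gcd(d,p)$. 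As $0<d<p$ we have $g \le d<p$ and $g \mid p$, hence $V=(V[1\dd g])^{p/g}$ with $p/g \ge 2$, so $V$ is not primitive.

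The main obstacle is precisely this converse step. A careless reading of an interior occurrence yields only the single period $d$, which is not enough to force a proper power: for instance $\mathrm{aba}$ has period $2<3$ yet occurs only twice in $\mathrm{abaaba}$ and is primitive. The point I must get right is that an interior occurrence at shift $d$ is equivalent to a nontrivial cyclic rotation $\rho^{d}(V)=V$, and hence forces the \emph{complementary pair} of periods $d$ and $p-d$, which is exactly what makes Fine and Wilf's lemma applicable. If one prefers to avoid Fine and Wilf entirely, the same conclusion follows by observing that the set $\{d \in [0\dd p) : \rho^{d}(V)=V\}$ is a subgroup of $\mathbb{Z}/p\mathbb{Z}$, hence equal to $\langle g \rangle$ for some $g \mid p$, so the total number of occurrences of $V$ in $V^2$ equals $p/g+1$; this is $2$ exactly when $g=p$, i.e. exactly when $V$ is primitive.
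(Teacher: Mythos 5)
Your proof is correct. Note, however, that the paper does not prove this lemma at all: it is stated as a known consequence of Fine and Wilf's lemma with a citation to a textbook, so there is no in-paper argument to compare against. Your write-up supplies the missing details accurately. The reduction to ``no interior occurrence at a shift $d$ with $0<d<p$'' is the right normalization, the easy direction is fine, and---crucially---you avoid the standard trap in the converse: an interior occurrence at shift $d$ is the statement $\rot^d(V)=V$, which yields the \emph{pair} of periods $d$ and $p-d$, not just $d$ (your $\mathrm{aba}$/$\mathrm{abaaba}$ example correctly illustrates why a single period $d<p$ is insufficient). With both periods in hand, $d+(p-d)-\gcd(d,p-d)\le p$ makes Fine and Wilf applicable, and $\gcd(d,p)\mid p$ with $\gcd(d,p)<p$ forces $V$ to be a proper power. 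Your alternative group-theoretic argument---that $\{d\in[0 \dd p) : \rot^d(V)=V\}$ is a subgroup of $\mathbb{Z}/p\mathbb{Z}$, so the occurrence count is $p/g+1$ for some $g\mid p$---is also valid, is self-contained, and even gives the stronger quantitative statement that a string with primitive root of length $g$ has exactly $p/g+1$ occurrences in its square. Either version would serve as a complete proof of the lemma the paper takes on faith.
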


A string of the form $U^2$ is called a \emph{square}.

\begin{theorem}[\cite{DBLP:journals/jct/FraenkelS98} and \cite{DBLP:journals/tcs/CrochemoreIKRRW14,DBLP:conf/cpm/BannaiIK17}]\label{thm:squares}
The number of distinct square substrings in a length-$n$ string is $\Oh(n)$ and they can all be computed in $\Oh(n)$ time assuming an integer alphabet.
\end{theorem}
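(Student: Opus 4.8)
The plan is to establish the two assertions separately: first the $\Oh(n)$ bound on the number of distinct squares, and then the linear-time enumeration, the latter resting on the theory of runs. For the bound I would follow the classical \emph{rightmost-occurrence} argument of Fraenkel and Simpson~\cite{DBLP:journals/jct/FraenkelS98}. To each distinct square $W$ occurring in $T$ I associate the starting position of its rightmost occurrence in $T$; it then suffices to show that at most two distinct squares are associated with any fixed position $i$, which immediately yields at most $2n=\Oh(n)$ distinct squares.

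The combinatorial heart of this step is a \emph{three-squares lemma}: if three squares $U^2,V^2,W^2$ with $|U|<|V|<|W|$ all start at position $i$ (equivalently, are prefixes of $T[i\dd n]$) and $U$ is primitive, then $U^2$ has a further occurrence strictly to the right of $i$ inside $W^2$, so $i$ is not the rightmost occurrence of the shortest of the three. I would derive this from the synchronization property (\cref{lem:synch}) via Fine and Wilf's lemma: the long common prefix forces $|U|$ to be a period of a prefix of $W^2$, so a fresh copy of $U$, and hence of $U^2$, reappears. Reducing every square to its primitive root (a square with non-primitive root $(Q^k)^2=Q^{2k}$ is governed by the primitive $Q$, and $Q^2$ starts wherever $Q^{2k}$ does) lets me assume primitivity throughout, and the lemma then caps the number of rightmost occurrences per position at two.

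For the algorithmic part I would reduce the enumeration of distinct squares to the computation of \emph{runs} (maximal repetitions)~\cite{DBLP:conf/focs/KolpakovK99}. The bridge is that every occurrence of a square $W=X^2$ sits inside a run whose period equals the minimal period $p$ of $W$: writing $X=Q^m$ with $Q$ primitive and $|Q|=p$, the string $W=Q^{2m}$ is $p$-periodic, and extending this periodicity maximally to both sides yields a run of period $p$ and exponent at least $2$. Thus each distinct square can be read off from some run as a power of a cyclic rotation of that run's primitive root. After computing all $\Oh(n)$ runs in $\Oh(n)$ time over an integer alphabet~\cite{DBLP:conf/focs/KolpakovK99}, I would scan the runs, list for each the square powers it induces, and discard duplicates by radix-sorting the resulting $\Oh(n)$ representatives (say, as pairs of a canonical starting position and a length); by the synchronization property (\cref{lem:synch}) the occurrences of a given primitively-rooted square inside a run are phase-aligned, so a leftmost-occurrence canonicalization is well defined. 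This is precisely the strategy realized in the linear-time algorithms of~\cite{DBLP:journals/tcs/CrochemoreIKRRW14,DBLP:conf/cpm/BannaiIK17}, which I would invoke.

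The main obstacle is concentrated in two places. For the $\Oh(n)$ bound, the delicate core is the three-squares lemma: a direct Fine--Wilf argument splits into several overlap cases, and one must reduce squares with non-primitive roots to primitive ones so that \cref{lem:synch} applies uniformly. For the enumeration, the genuinely deep ingredient is the $\Oh(n)$-time computation of all runs over an integer alphabet, which I would treat as a black box. Once the runs are available, the remaining difficulty is bookkeeping: a single square may be induced by many runs (for instance $Q^2$ occurs inside every longer run with root $Q$), so without the leftmost-occurrence canonicalization the naive generation could be superlinear, and ensuring that each distinct square is emitted exactly once is the point that needs the most care.
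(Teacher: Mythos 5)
The paper does not prove this statement at all: it is imported verbatim, with citations, as \cref{thm:squares}, and both halves (the $\Oh(n)$ bound on distinct squares and the linear-time enumeration) are used as black boxes. So there is no internal proof to compare against; what you have written is a reconstruction of the cited literature, and in outline it matches that literature --- the Fraenkel--Simpson rightmost-occurrence counting with a three-squares lemma for the bound, and the runs-based enumeration of \cite{DBLP:journals/tcs/CrochemoreIKRRW14,DBLP:conf/cpm/BannaiIK17} for the algorithm. That is the right shape, and invoking the $\Oh(n)$-time runs computation as a black box is exactly what those papers do.

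One step of your counting argument is looser than you acknowledge. You claim that reducing every square to its primitive root ``lets me assume primitivity throughout,'' but bounding the number of distinct \emph{primitively-rooted} squares whose rightmost occurrence starts at a given position does not bound the number of \emph{all} distinct squares with rightmost occurrence there: $Q^2, Q^4, Q^6, \ldots$ are pairwise distinct squares sharing the primitive root $Q$, and the rightmost occurrence of $Q^{2k}$ need not coincide with that of $Q^2$, so the non-primitively-rooted squares are not simply ``governed by'' their roots for the purposes of this count. The actual Fraenkel--Simpson argument applies the three-squares lemma with primitivity required only of the \emph{shortest} of the three squares and handles the case of a non-primitive shortest root by a separate periodicity argument producing a later occurrence directly. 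Your sketch correctly identifies this as the delicate point, but as stated the reduction does not close it. This is a fixable imprecision in a classical proof rather than a wrong approach; for the purposes of this paper, simply citing \cite{DBLP:journals/jct/FraenkelS98} (or the sharper bound of Brlek and Li mentioned in \cref{ss:CST}) is all that is needed.
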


We say that a string $S$ has a period $p$ if $S[i]=S[i+p]$ holds for all $i \in [1\dd |S|-p]$; equivalently, if $S$ has a border of length $|S|-p$. By $\per(S)$ we denote the smallest period of~$S$.

A \emph{run} in a string $T$ is a triad $(a,b,p)$ such that (1) $p$ is the smallest period of $T[a\dd b]$, (2) $2p \le b-a+1$, (3) $a=1$ or $T[a-1] \ne T[a-1+p]$, and (4) $b=|T|$ or $T[b+1] \ne T[b+1-p]$. The \emph{exponent} of a run $R=(a,b,p)$ is defined as $\expon(R)=(b-a+1)/p$. By $\R(T)$ we denote the set of all runs in $T$.

\begin{theorem}[\cite{DBLP:journals/siamcomp/BannaiIINTT17}]\label{thm:runs}
A string $T$ of length $n$ has at most $n$ runs and they can be computed in $\Oh(n)$ time if $T$ is over an integer alphabet.
\end{theorem}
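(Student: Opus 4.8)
The plan is to establish the two assertions separately: the combinatorial bound $|\R(T)| \le n$, and the $\Oh(n)$-time computation, with the bulk of the work in the former. For the bound I would follow the ``Lyndon root'' technique of Bannai et al.~\cite{DBLP:journals/siamcomp/BannaiIINTT17} based on two opposite orderings of the alphabet. Fix an arbitrary total order $<_0$ on $\Sigma$ and let $<_1$ be its reversal; each $<_k$ (for $k\in\{0,1\}$) induces a lexicographic order on strings. Recall that a nonempty string is a \emph{Lyndon word} with respect to $<_k$ if it is strictly smaller than all of its proper suffixes; such a string is necessarily primitive, and by \cref{lem:synch} every primitive string has, for each fixed order, a unique rotation that is a Lyndon word. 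For a run $R=(a,b,p)$ the period $T[a\dd a+p)$ is primitive, since $p=\per(T[a\dd b])$ is the \emph{smallest} period, so for each $k$ it has a unique Lyndon conjugate $\lambda_k$; because $\expon(R)\ge 2$, this conjugate occurs at several period-aligned positions inside $R$.

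The heart of the argument is an injective assignment of runs to positions. For each position $i$ and order $k$, let $\ell_k(i)$ denote the longest Lyndon word (with respect to $<_k$) starting at $i$; for a fixed $k$ each position is the start of exactly one such longest Lyndon word, and its right endpoint is governed by where the lexicographically next-smaller suffix begins. The key lemma I would prove is that to each run $R$ one can canonically associate an order $k\in\{0,1\}$ and an \emph{interior} occurrence of $\lambda_k$, starting at some position $i\in(a\dd b]$, which coincides exactly with $\ell_k(i)$; and that the resulting map $R\mapsto i$ is injective. The order $k$ is selected according to the direction of the mismatch $T[b+1]\ne T[b+1-p]$ that terminates $R$ on the right (condition~(4) in the definition of a run), i.e.\ according to whether $T[b+1]<_0 T[b+1-p]$ or $T[b+1]>_0 T[b+1-p]$; the correct choice forces the selected Lyndon-root occurrence to be right-maximal as a Lyndon word and hence to equal $\ell_k(i)$. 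Injectivity follows because two runs sharing the same anchor $i$ under the same canonically chosen order would share both their Lyndon root and its starting position, forcing equal periods and therefore, by the maximality conditions~(3)--(4), the same run. Since the combined map lands injectively in $[1\dd n]$, we obtain $|\R(T)|\le n$ (a finer analysis in fact yields $n-1$).

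For the $\Oh(n)$-time computation I would use the classical Lempel--Ziv / longest-common-extension machinery. First build $\ST(T)$ in $\Oh(n)$ time by Farach's algorithm~\cite{DBLP:conf/focs/Farach97}, from which the suffix array, its inverse, and the LCP array follow in $\Oh(n)$; augmenting the LCP array with a range-minimum structure~\cite{DBLP:conf/latin/BenderF00} answers any longest common extension (LCE) query in $\Oh(1)$ after $\Oh(n)$ preprocessing, for both $T$ and its reversal. Using these, compute the Lempel--Ziv factorization of $T$ in $\Oh(n)$ time and then apply the Kolpakov--Kucherov scheme~\cite{DBLP:conf/focs/KolpakovK99}: for every factor boundary, each run whose periodic structure spans that boundary is detected by extending a candidate period left and right with two $\Oh(1)$ LCE queries and checking conditions~(1)--(4). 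Every run is discovered only a constant number of times this way, so by the bound $|\R(T)|\le n$ just established the total work is $\Oh(n)$; a final radix sort of the $\Oh(n)$ candidate triples $(a,b,p)$ removes duplicates in $\Oh(n)$ time.

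The main obstacle is the combinatorial bound, specifically the injective Lyndon-root assignment: one must argue that the correct order $k$ can always be read off from the right-end mismatch so that the chosen Lyndon-conjugate occurrence really is the \emph{longest} Lyndon word at its starting position, and that distinct runs never collide on the same anchor once the order has been fixed canonically per run. This is exactly where the factor of two between the two orderings is eliminated, and it is the delicate part of the proof. Once this structural statement is in hand, the linear-time computation is comparatively routine, resting only on the standard linear-time suffix data structures, constant-time LCE queries, and the finiteness guaranteed by the bound itself.
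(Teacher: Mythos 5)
This theorem is quoted from Bannai et al.~\cite{DBLP:journals/siamcomp/BannaiIINTT17}; the paper itself offers no proof, so your proposal can only be compared against the cited source. Measured that way, your outline is the right one: the bound via longest Lyndon words under two opposite alphabet orders, with the order for a run $R=(a,b,p)$ chosen from the terminating mismatch $T[b+1]\ne T[b+1-p]$, and an interior occurrence of the Lyndon root serving as the anchor --- this is exactly the Bannai--I--Inenaga--Nakashima--Takeda--Tsuruta argument. Your selection criterion is also correct in substance: choosing $\ell$ so that $T[b+1]<_\ell T[b+1-p]$ makes the suffix starting one period after the anchor $<_\ell$-smaller than the suffix at the anchor (the two suffixes agree up to position $b$ and then diverge at that mismatch), which caps the longest Lyndon word at the anchor at length exactly $p$, while the Lyndon property of the root rules out an earlier next-smaller suffix.

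The one genuinely incomplete step is injectivity \emph{across} the two orders, which you assert but do not prove. Your argument (``two runs sharing the same anchor under the same canonically chosen order\ldots'') only excludes collisions within one order. For two runs anchored at the same position $i$ under opposite orders you need the observation that the comparison of $T[i+1\dd n]$ with $T[i\dd n]$ flips between $<_0$ and $<_1$ unless one suffix is a prefix of the other (in which case it is smaller under both), so at most one of the two orders admits a longest Lyndon word of length at least $2$ at $i$; this kills collisions between two roots of length at least $2$. The remaining case --- a period-$1$ run colliding with a root of length $p'\ge 2$ --- needs a separate small argument: interiority of the anchor in the period-$1$ run forces $T[i-1]=T[i]$, and interiority in the second run together with its period forces $T[i+p'-1]=T[i-1]$, so the second root would end with its own first character; that character is then a proper suffix and a proper prefix of the root, hence smaller than it in either order, contradicting Lyndonness. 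Without some such case analysis the map $R\mapsto i$ is not yet injective. On the algorithmic half, your Lempel--Ziv/Kolpakov--Kucherov route~\cite{DBLP:conf/focs/KolpakovK99} with constant-time LCE queries is a valid linear-time alternative for integer alphabets, though the cited paper's own algorithm is simpler: it computes, for each position and each of the two orders, the longest Lyndon word starting there (from next-smaller-suffix information derived from the suffix array) and verifies each of the at most $2n$ resulting candidate runs with two LCE queries, with no LZ factorization needed.
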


An earlier bound $|\R(T)|=\Oh(n)$ together with an $\Oh(n)$-time algorithm for computing $\R(T)$ was proposed in \cite{DBLP:conf/focs/KolpakovK99}. All runs can be computed in $\Oh(n)$ time also for a string over a general ordered alphabet~\cite{DBLP:conf/icalp/Ellert021}.

\subsection{Compact tries}
The \emph{suffix trie} of a string $T$ contains a node for every distinct substring of $T\#$, where $\# \not\in\Sigma$ is a special end marker. The root node is the empty string. For each pair of substrings $(S,Sc)$ of $T$, where $c \in \Sigma$, there is an edge from $S$ to $Sc$ labeled with the character $c$. Each suffix of $T\#$ corresponds to a leaf of the suffix trie.

A \emph{compact suffix trie} of $T$ contains the root, the branching nodes, the leaf nodes, and possibly some other nodes of the suffix trie as \emph{explicit nodes}. Maximal paths that do not contain explicit nodes are replaced by single compact edges, and a fragment of $T$ is used to represent the label of every such edge in $\Oh(1)$ space. The nodes that are dissolved due to compactification are called \emph{implicit nodes}; an implicit node $u$ can be referred to as a pair $(v,d)$ where $v$ is the nearest explicit descendant of $u$ and $d$ is the distance (number of characters) between $u$ and $v$. The most common example of a compact suffix trie of $T$ is the \emph{suffix tree} of $T$, denoted here as $\ST(T)$, in which each maximal branchless path from the suffix trie is replaced by a single compact edge.

\begin{theorem}[\cite{DBLP:conf/focs/Farach97,DBLP:journals/jacm/KarkkainenSB06}]
\label{thm:ST}
The suffix tree of a string of length $n$ over an integer alphabet can be constructed in $\Oh(n)$ time.
\end{theorem}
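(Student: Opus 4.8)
The plan is to build the suffix tree from a \emph{suffix array} together with a \emph{longest-common-prefix} (LCP) array, reducing the task to constructing these two arrays in linear time and then assembling the tree. First I would construct the suffix array of $T$ (with a sentinel $\#$ appended) using the \emph{skew} (DC3) algorithm. The key idea is a divide step by residue modulo $3$: call positions $i$ with $i \bmod 3 \in \{1,2\}$ \emph{sample} positions and the rest \emph{non-sample}. For the sample positions I would form the length-$3$ blocks of characters beginning at each such position, radix-sort these blocks (linear time since the alphabet is $\{0,\ldots,n^{\Oh(1)}\}$), and replace each block by its rank. Concatenating the ranks of the $i\equiv 1$ positions followed by those of the $i\equiv 2$ positions yields a string $T'$ of length $\approx \tfrac{2}{3}n$ whose suffix array, obtained by recursion, is exactly the sorted order of the sample suffixes of $T$; if all block ranks are already distinct the recursion is unnecessary.

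Next I would sort the non-sample suffixes (those with $i\equiv 0$): each such suffix is determined by its first character together with the suffix starting at $i+1$, which is a sample suffix whose rank is already known, so a single radix-sort pass on pairs suffices. To finish the array I would \emph{merge} the sorted sample and non-sample suffixes. The crucial point is that any sample suffix can be compared with any non-sample suffix in $\Oh(1)$: consuming at most two leading characters reaches a pair of tails that both begin at sample positions, and the precomputed ranks of those tails break the tie. Since each recursion level does a constant number of radix sorts and one linear-time merge, the running time satisfies $T(n)=T(\lceil \tfrac{2}{3}n\rceil)+\Oh(n)=\Oh(n)$.

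Given the suffix array I would compute the LCP array by the standard Kasai et al.\ scan, which processes suffixes in \emph{text} order and exploits the fact that consecutive-in-text suffixes have LCP values, with their suffix-array neighbours, that drop by at most one, yielding an amortized-$\Oh(1)$ cost per position and $\Oh(n)$ total. Finally, I would assemble $\ST(T)$ from the suffix array and LCP array by inserting the suffixes in suffix-array order while maintaining the current rightmost root-to-leaf path on a stack: each LCP value indicates how far up that path the next branching node lies, new internal nodes are created as needed, and each edge label is stored as a fragment of $T$ in $\Oh(1)$ space. Every node is pushed and popped at most once, so this pass is $\Oh(n)$ as well.

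The main obstacle is the constant-time suffix comparison underpinning the merge step: one must verify that, after consuming the few leading characters needed to align two suffixes to sample positions, the recursively computed ranks indeed determine their relative order, and that the interleaving of the $i\equiv 1$ and $i\equiv 2$ rank blocks in $T'$ is arranged so that these ranks are mutually comparable. Establishing this invariant, and checking that the base case and sentinel padding are handled so that no comparison runs off the end of the string, is the delicate part; once it is in place, the linear-time bound and the correctness of the assembled suffix tree follow routinely.
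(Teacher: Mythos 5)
Your proposal is correct and matches the paper's own treatment: the paper gives no proof, citing Farach's algorithm and the skew (DC3) algorithm of K\"arkk\"ainen, Sanders, and Burkhardt, and your sketch is precisely the latter (mod-3 sampling, recursion on the rank string of length $\approx \tfrac{2}{3}n$, constant-time merge comparisons) combined with the standard Kasai LCP computation and stack-based suffix-array-to-suffix-tree assembly. The delicate points you flag (mutual comparability of the $i\equiv 1$ and $i\equiv 2$ ranks via the sentinel padding in $T'$) are exactly the invariants established in the cited reference, so no gap remains.
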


For a node $v$ of a compact suffix trie $\T$ of $T$, the corresponding substring $\bar{v}$ of $T$ is called the \emph{string label} of $v$.
Conversely, for a substring (or fragment) $S$ of $T$, its locus in $\T$ is the (explicit or implicit) node $v$ of $\T$ such that $\bar{v}=S$.

The locus in $\ST(T)$ of a substring $S$ is denoted as $\locus(S)$.
For a non-root explicit node $v$ of $\ST(T)$, its \emph{suffix link} leads from $v$ to the node $\suf(v)=\locus(X)$, where $\bar{v}=cX$, $c \in \Sigma$; it is known that $\suf(v)$ is then an explicit node.
By $\ST'(T)$ we denote tree of suffix links in $\ST(T)$. The nodes of $\ST'(T)$ are the explicit nodes of $\ST(T)$ and for each non-root explicit node $v$ of $\ST(T)$, in $\ST'(T)$ there is an edge connecting node $v$ with node $\suf(v)$.

Let $\Sq(T)=\{S\,:\,S^2\mbox{ is a substring of }T\}$. Then the \emph{tree structure} of $\CST(T)$ is a compact suffix trie of $T$ that could be obtained from the suffix tree $\ST(T)$ by making loci of substrings $\Sq(T)$ explicit.

A \emph{weighted ancestor query} on a compact suffix trie $\T$ is given a leaf $\ell$ of $\T$ and a non-negative integer $d$ and asks for the topmost (explicit) ancestor $w$ of $\ell$ such that $|\bar{w}| \ge d$. We denote such a query and its result as $w=\WA(\ell,d)$. We use the following offline solution to the problem of answering $\WA$ queries.

\begin{theorem}[\cite{DBLP:journals/talg/KociumakaKRRW20}]\label{thm:WAQ}
Any $q$ weighted ancestor queries on a compact suffix trie with $\Oh(n)$ nodes of a length-$n$ string can be answered in $\Oh(n+q)$ time.
\end{theorem}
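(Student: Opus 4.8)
The plan is to reduce the offline weighted-ancestor problem to a one-dimensional \emph{threshold nearest-smaller-value} (threshold-NSV) problem on an integer array derived from the tree, which can be solved offline in linear time, and then to account for the non-branching explicit nodes of the compact suffix trie by a cheap per-edge postprocessing step.

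First I would fix the left-to-right order $\lambda_1,\dots,\lambda_L$ of the leaves of the compact suffix trie $\T$ and set $H[k]$ to be the weight $|\bar u|$ of the lowest common ancestor $u$ of $\lambda_k$ and $\lambda_{k+1}$, for $k=1,\dots,L-1$; this is the compact-suffix-trie analogue of the LCP array and is computed in $\Oh(n)$ time using a standard $\Oh(n)$-preprocessing, $\Oh(1)$-query LCA structure. The leaf set of any node is a contiguous block $[\,\mathrm{first}(u),\mathrm{last}(u)\,]$, the blocks of the ancestors of a fixed leaf are nested and their weights increase toward the leaf, and the node spanning a block $[p,q]$ has weight $\min_{p\le k<q}H[k]$. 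Consequently, for a query $\WA(\lambda_i,d)$ the topmost \emph{branching} ancestor with weight at least $d$ is exactly the node spanning the maximal block $[p,q]\ni i$ with $H[k]\ge d$ for all $k\in[p,q)$. Its endpoints are $p=1+\max\{k<i:H[k]<d\}$ and $q=\min\{k\ge i:H[k]<d\}$ (with the obvious boundary conventions), i.e.\ the nearest positions on either side of $i$ whose $H$-value drops below the threshold $d$ — precisely a pair of threshold-NSV queries. The branching node itself is then recovered in $\Oh(1)$ as the LCA of $\lambda_p$ and $\lambda_q$.

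The subtlety specific to the CST is that $\WA$ asks for the topmost \emph{explicit} ancestor, and a compact suffix trie may carry non-branching explicit nodes (the square-half nodes) on its edges, which the block/LCA correspondence does not see. I would handle this in a second phase: the computation above returns, for each query, the branching node $w_b$ with $|\bar{w_b}|\ge d$ whose branching parent has weight $<d$, so the sought node lies on the single tree edge from that branching parent down to $w_b$ and is the shallowest explicit node on it of weight at least $d$. Bucketing all queries by this edge and radix-sorting, on each edge, the query thresholds together with the weights of the (few) explicit nodes placed on it, each such per-edge successor is answered by one linear merge; over all edges this costs $\Oh(n+q)$, since both the total number of explicit nodes and the total number of queries are $\Oh(n+q)$ and all keys are integers in $[0,n]$.

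It remains to solve the threshold-NSV instances offline in linear time; this is the step I expect to be the main obstacle, since a direct predecessor search costs a $\log$ factor and a plain disjoint-set implementation only yields $\Oh((n+q)\,\alpha(n))$. Because the keys $H[k]$ and the thresholds $d$ are integers in $[0,n]$, I would radix-sort them and sweep the threshold from $n$ down to $0$, activating each position $k$ in order of decreasing $H[k]$; at threshold $d$ the active positions are exactly those with $H[k]\ge d$, and answering the left (resp.\ right) query at $i$ amounts to finding the nearest \emph{inactive} position to the left (resp.\ right) of $i$. As each position is activated it is merged with its neighbour, so the ``find nearest still-inactive position'' operations are governed by a union--find instance whose union forest (the order and adjacency pattern of merges) is fixed in advance; for this restricted, offline, static-tree set-union problem there is a linear-time solution, giving $\Oh(n+q)$ overall. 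Equivalently, the whole reduction can be recast as an offline lowest-common-ancestor computation and dispatched by Tarjan's offline algorithm, whose linear running time rests on the same special set-union primitive. Combining the two phases yields the claimed $\Oh(n+q)$ bound.
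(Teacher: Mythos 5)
This theorem is imported by the paper from \cite{DBLP:journals/talg/KociumakaKRRW20} without an in-paper proof, so the comparison is against the argument in that source. Your proof is correct, but it takes a genuinely different route. The cited proof runs the reduction directly on the trie: the $q$ thresholds are radix-sorted, the nodes are activated in order of decreasing weight (string depth), each newly activated node is unioned with the sets of its already-active children (so each set is a downward-closed subtree whose root is recorded), and a query $\WA(\ell,d)$ is answered by a find from $\ell$ at the moment when exactly the nodes of weight $\ge d$ are active; since all unions follow edges of a tree fixed in advance, the linear-time special-case disjoint-set-union algorithm of Gabow and Tarjan gives $\Oh(n+q)$ outright, and — importantly — branching and non-branching explicit nodes (the square-half loci) are treated completely uniformly, so no analogue of your second phase is needed. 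Your detour through the leaf order and the LCP-style array $H$ is sound: the block characterization of the topmost \emph{branching} ancestor, its retrieval as the LCA of $\lambda_p$ and $\lambda_q$, and the translation into two threshold-NSV queries are all correct; your decreasing-threshold sweep with merges of adjacent blocks is an interval (path) union-find, which is again precisely the static-tree instance that Gabow--Tarjan solves in linear time on the word RAM (you are right that a generic disjoint-set structure would only give an $\alpha(n)$ factor, and the restriction to the RAM model is legitimate here since the paper assumes an integer alphabet and weights in $[0 \dd n]$). You also correctly identified and repaired the one real gap of the array view — that it is blind to non-branching explicit nodes — via the per-edge bucketing and radix-sorted merge, which costs $\Oh(n+q)$ as claimed. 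In the end both arguments rest on the same engine (radix sorting plus the Gabow--Tarjan union-find); what yours buys is an explicit one-dimensional reformulation, at the price of two extra layers (the NSV translation and the per-edge patch) that the direct tree-side activation in \cite{DBLP:journals/talg/KociumakaKRRW20} avoids.
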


Data structures for answering weighted ancestor queries with different complexities are known~\cite{DBLP:journals/talg/AmirLLS07,DBLP:conf/soda/GanardiG22}, also in the special case of the compact suffix trie being the suffix tree~\cite{DBLP:conf/cpm/BelazzouguiKPR21,DBLP:conf/esa/GawrychowskiLN14}.

Let $v$ be a node of a compact suffix trie and $S=\bar{v}$. Then $\c(v)$ is formally defined as
\[\c(v)\,=\,\bigcup\, \{\,[i \dd i+|S|)\,:\, T[i \dd i+|S|)=S\,\}.\]
Moreover, $\n(v)$ equals one plus the number of non-overlapping consecutive occurrences of $S$ in $T$.
$\CST(T)$ stores the values $\c(v)$ and $\n(v)$ for each explicit node $v$. 
By $\occ(v)$ we further denote the total number of occurrences of $\bar{v}$ in $T$. The values $\occ(v)$ for all explicit nodes of a compact suffix trie can be computed bottom-up in linear time, as $\occ(v)$ is the number of leaves in the subtree of node $v$.
By $\o(v)$ we denote the number of overlapping consecutive occurrences of $\bar{v}$ in $T$. We have $\o(v)+\n(v)=\occ(v)$.
We use the notations $\c()$, $\n()$, $\o()$ and $\occ()$ also for substrings of $T$.

\section{Construction of the Cover Suffix Tree}\label{sec:CST}

\subsection{Computing the tree structure}

\begin{lemma}\label{lem:CSTstructure}
The tree structure of the CST of a string $T$ of length $n$ over an integer alphabet can be computed in $\Oh(n)$ time.
\end{lemma}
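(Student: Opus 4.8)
The plan is to start from the ordinary suffix tree and make explicit exactly the loci of the square halves. First I would build $\ST(T)$ in $\Oh(n)$ time using \cref{thm:ST}, and record for every position $i \in [1 \dd n]$ the leaf $\ell_i$ of $\ST(T)$ corresponding to the suffix of $T$ starting at position $i$; this inverse map is obtained in $\Oh(n)$ time by scanning the leaves. Recall that the tree structure of $\CST(T)$ is obtained from $\ST(T)$ by making the loci of all strings in $\Sq(T)=\{S : S^2 \text{ is a substring of } T\}$ explicit. By \cref{thm:squares} the set of distinct squares, and hence $\Sq(T)$, has size $\Oh(n)$ and can be computed in $\Oh(n)$ time; each distinct square is reported together with a starting position of one of its occurrences, so every square half $S \in \Sq(T)$ is available as a fragment $T[i \dd i+|S|)$ with $i$ and $|S|$ known.

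Next I would locate, for each $S \in \Sq(T)$, the (explicit or implicit) node of $\ST(T)$ where it must be made explicit. Since $S$ occurs at position $i$, its locus lies on the root-to-$\ell_i$ path at string depth $|S|$, so it suffices to run the weighted ancestor query $w = \WA(\ell_i, |S|)$. By definition $w$ is the shallowest explicit ancestor of $\ell_i$ with $|\bar{w}| \ge |S|$, and because the parent $p$ of $w$ is also an explicit ancestor of $\ell_i$ we have $|\bar{p}| < |S| \le |\bar{w}|$. Hence $\locus(S)$ lies on the edge from $p$ to $w$: it equals $w$ when $|S| = |\bar{w}|$ (already explicit, nothing to do) and is the implicit node at depth $|S|$ on that edge otherwise. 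Batching the $\Oh(n)$ queries and applying \cref{thm:WAQ} answers all of them in $\Oh(n)$ time. This step is the crux of the argument; everything else is bookkeeping.

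Finally I would subdivide the edges. For each square half with $|S| < |\bar{w}|$ I create an insertion request $(w, |S|)$, identifying the target edge by its deeper endpoint $w$. Since distinct squares have distinct halves and distinct substrings have distinct loci, all requests are pairwise distinct, so no two coincide. I radix-sort the $\Oh(n)$ requests lexicographically by $(w, |S|)$ — first a bucket sort on the depth $|S| \in [1 \dd n]$, then a stable bucket sort on the $\Oh(n)$ node identifiers $w$ — in $\Oh(n)$ time. This groups, for every edge, its requested depths in increasing order. Processing one edge $p \to w$ with requested depths $d_1 < \cdots < d_k$, I replace it by the path $p \to u_1 \to \cdots \to u_k \to w$, where each new explicit node $u_t$ has string depth $d_t$; the labels of the new edges are fragments of the original edge label, computed in $\Oh(1)$ time each from the offsets $d_t - |\bar{p}|$. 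The relinking touches only $w$ and the newly created nodes, so the total work is $\Oh(n)$. Combining the three phases yields the claimed $\Oh(n)$-time construction of the tree structure of $\CST(T)$.
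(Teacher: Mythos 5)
Your proposal is correct and follows essentially the same route as the paper: build $\ST(T)$, enumerate $\Sq(T)$ via \cref{thm:squares}, locate each square half with a batched weighted ancestor query (\cref{thm:WAQ}), radix-sort the resulting (edge, depth) requests, and subdivide edges. The only immaterial difference is that you sort each edge's requests by increasing rather than decreasing depth and index the edge by its deeper endpoint together with the string depth instead of the distance to the nearest explicit descendant; both conventions work equally well.
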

\begin{proof}
The suffix tree of a string over an integer alphabet can be constructed in $\Oh(n)$ time (\cref{thm:ST}). By \cref{thm:squares}, the set $\Sq(T)$ of square substring halves, each represented as a fragment of $T$, can be computed in $\Oh(n)$ time.

The final step is to make all implicit nodes of the suffix tree that correspond to elements of $\Sq(T)$ explicit. Let $T[i\dd i+2d)$ be a square substring and $\ell$ be the leaf of the suffix tree of $T$ corresponding to the suffix $T[i \dd n]$. Using a weighted ancestor query we can compute a pair $(v,p)$ where $v=\WA(\ell,d)$ is the nearest explicit descendant of the locus $u$ of $T[i\dd i+d)$ and $p$ is the distance between $u$ and $v$. With \cref{thm:WAQ} a batch of $\Oh(n)$ such queries can be answered in $\Oh(n)$ time. Finally, we use Radix Sort to sort the pairs $(v,p)$ (under an arbitrary, fixed order on nodes of the suffix tree) in $\Oh(n)$ time. As a result, the loci of substrings in $\Sq(T)$ are grouped by their nearest explicit descendants, and each group is sorted by decreasing depths. This allows to make all the desired implicit nodes explicit in $\Oh(n)$ time.
\end{proof}

\subsection{Properties of overlapping consecutive occurrences}
If a substring $S$ of $T$ does not have overlapping occurrences in $T$, i.e., $\o(S)=0$, then $\c(S)=\n(S) \cdot |S| = \occ(S) \cdot |S|$ is easy to compute. Hence, below we characterize overlapping consecutive occurrences of substrings. To this end, we use runs.

For indices $1 \le i \le j_1 \le j_2 \le n$, we denote the set of fragments corresponding to a path in the suffix tree of $T$:
\[\Path(i,j_1,j_2) =\{T[i\dd j]\,:\,j \in [j_1 \dd j_2]\}.\]
For a run $R=(a,b,p)$ in $T$, we denote
\[\Triangle(R) = \Triangle(a,b,p) = \bigcup_{i=a}^{b-2p} \Path(i,i+p,b-p).\]
\cref{fig:triangle} gives a graphical motivation for the name of this set of fragments, whereas \cref{fig:triangle2} shows that in some cases the triangle is ``wrapped''.

The following key combinatorial lemma shows that sets $\Triangle(R)$ for $R \in \R(T)$ are sufficient for counting overlapping consecutive occurrences.

\renewcommand{\tabcolsep}{1pt}
\begin{figure}[htpb]
\centering
\begin{tikzpicture}
\draw (3,3.5) -- node[sloped,left,rotate=270] {
\begin{tabular}{c}
a\\b\\c\\d\\e
\end{tabular}
} (0,0);
\draw (3,3.5) -- node[sloped,left,rotate=270] {
\begin{tabular}{c}
b\\c\\d\\e\\a
\end{tabular}
} (2,0);
\draw (3,3.5) -- node[sloped,right,rotate=90] {
\begin{tabular}{c}
c\\d\\e\\a\\b
\end{tabular}
} (4,0);
\draw (3,3.5) -- node[sloped,right,rotate=90] {
\begin{tabular}{c}
d\\e\\a\\b\\c
\end{tabular}
} (6,0);
\draw (0,0)  -- node[left] {a} (0,-0.5);
\draw (0,-0.5) -- node[left] {b} (0,-1);
\draw (0,-1) -- node[left] {c} (0,-1.5);
\draw (0,-1.5) -- node[left] {d} (0,-2);
\begin{scope}[xshift=2cm]
\draw (0,0)  -- node[left] {b} (0,-0.5);
\draw (0,-0.5) -- node[left] {c} (0,-1);
\draw (0,-1) -- node[left] {d} (0,-1.5);
\end{scope}
\begin{scope}[xshift=4cm]
\draw (0,0)  -- node[left] {c} (0,-0.5);
\draw (0,-0.5) -- node[left] {d} (0,-1);
\end{scope}
\begin{scope}[xshift=6cm]
\draw (0,0)  -- node[left] {d} (0,-0.5);
\end{scope}
\draw[-latex,shorten >=0.05cm,shorten <=0.05cm,densely dotted] (0,-2) -- node[sloped,below=0.15cm] {$\suf$} (2,-1.5);
\draw[-latex,shorten >=0.05cm,shorten <=0.05cm,densely dotted] (2,-1.5) -- node[sloped,below=0.15cm] {$\suf$} (4,-1);
\draw[-latex,shorten >=0.05cm,shorten <=0.05cm,densely dotted] (4,-1) -- node[sloped,below=0.15cm] {$\suf$} (6,-0.5);

\foreach \x/\y in {0/-2,2/-1.5,4/-1,6/-0.5, 0/0,2/0,4/0,6/0}{
\filldraw (\x,\y) circle (0.05cm);
}
\foreach \x/\y in {0/-1.5,0/-1,0/-0.5, 2/-1,2/-0.5, 4/-0.5}{
\filldraw (\x,\y) circle (0.03cm);
}

\draw[blue,very thick,rounded corners=1mm] (-0.1,-0.1) rectangle (6.1,0.1);
\draw (6.2,0) node[right] {\textcolor{blue}{$\Upper(R)$}};

\draw[red,very thick,rounded corners=1mm] (-0.0675,-2+0.1125) -- (-0.0675,-2-0.1125) -- (6+0.0675,-0.5-0.1125) -- (6+0.0675,-0.5+0.1125) -- cycle;
\draw (6.2,-0.5) node[right] {\textcolor{red}{$\Lower(R)$}};

\draw[green!70!black,very thick,rounded corners=1mm] (-0.4,-0.4) -- (-0.4,-2.3) -- (6.3,-0.65) -- (6.3,-0.4) -- cycle;
\draw (6.2,-1) node[right] {\textcolor{green!70!black}{$\Triangle(R)$}};

\begin{scope}[xshift=7.5cm,yshift=1.5cm,xscale=0.3]
\draw (-1,0) node[above] {$R$};
\foreach \x/\c in {1/a,2/b,3/c,4/d,5/e, 6/a,7/b,8/c,9/d,10/e, 11/a,12/b,13/c,14/d}{
    \draw (\x,-0.4) node[above] {\tiny \x};
    \draw (\x,0) node[above] {\c};
}
\clip (0.5,0) rectangle (14.5,2);
\foreach \dx in {0.5,5.5,10.5}{
    \draw[xshift=\dx cm] (0,0.5) .. controls (1.5,0.8) and (3.5,0.8) .. (5,0.5);
}
\end{scope}
\end{tikzpicture}
\caption{Illustration of the sets $\Triangle(R)$, $\Upper(R)$ and $\Lower(R)$ on paths in $\CST(T)$ for an example run $R=(1,14,5)$. All nodes representing fragments from $\Triangle(R)$ are distinct because $\expon(R) = 2.8 \le 3$. The nodes in $\Upper(R)$ and $\Lower(R)$ are explicit in $\CST(T)$ (see \cref{obs:lower_upper}).}
\label{fig:triangle}
\end{figure}
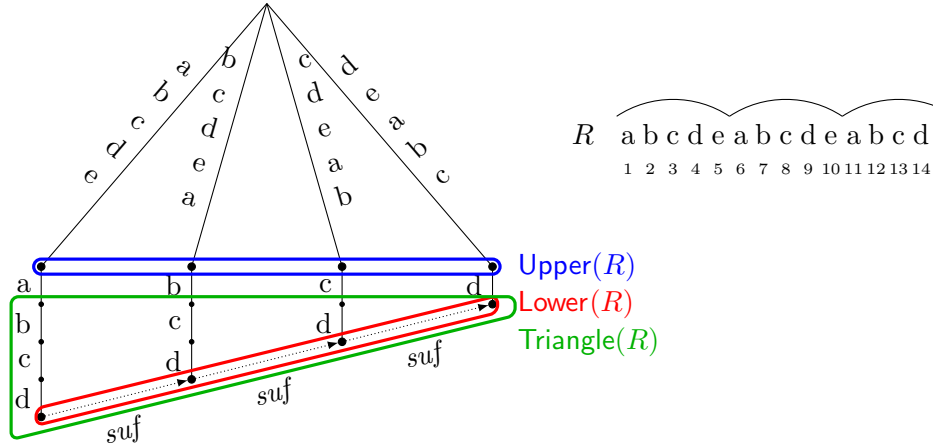
\renewcommand{\tabcolsep}{6pt}

\begin{lemma}\label{lem:triangle}
Let $S$ be a string of length $d$. Then $S$ has an overlapping consecutive occurrence $(i,j)$ in $T$ for some indices $i$, $j$ if and only if $S$ matches a fragment $T[i \dd i+d) \in \Triangle(R)$ for some run $R$ with period $j-i$ in $T$.
\end{lemma}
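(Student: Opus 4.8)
The plan is to prove both implications by pivoting on the quantity $p = j-i$ and showing it coincides with the smallest period of the string covered by two overlapping occurrences of $S$. The recurring observation I will use is that if $S$ occurs at $i$ and at $j$ with $p = j-i < d$, then the fragment $U = T[i \dd j+d)$, of length $p + d \ge 2p$, has period $p$: the two copies of $S$ (at $i$ and at $j$) agree on their overlap, and that agreement is exactly the statement $U[t] = U[t+p]$ for all admissible $t$. Conversely, inside a run of period $p$, shifting any window by $p$ reproduces it, which is what lets a single window in $\Triangle(R)$ generate the paired occurrence $(i, i+p)$.

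For the forward direction I would start from an overlapping consecutive occurrence $(i,j)$ and show that $p = j-i$ is the \emph{smallest} period of $U$. Since $p$ is already a period, $\per(U) \le p$; and if $\per(U) = q < p$, then the prefix $S = T[i \dd i+d)$ of $U$ would reoccur at position $i+q$ by periodicity, with $i < i+q < i+p = j$, contradicting that $(i,j)$ is consecutive. Hence $\per(U) = p$, and as $|U| \ge 2p$ the fragment $U$ is contained in a unique run $R = (a,b,p)$ of period $p$ (the Fine–Wilf lemma confirms the run's smallest period is $p$ and not a proper divisor). From $a \le i$ and $b \ge i + p + d - 1$ I would then verify the two triangle inequalities $a \le i \le b - 2p$ and $i + p \le i + d - 1 \le b - p$ — which reduce to $p \le d-1$ (overlapping) and $i + p + d - 1 \le b$ — to conclude $T[i \dd i+d) \in \Path(i, i+p, b-p) \subseteq \Triangle(R)$, with $R$ of period $j-i$ as required.

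For the backward direction, suppose $T[i \dd i+d) \in \Triangle(R)$ for a run $R = (a,b,p)$, and set $j = i+p$, $S = T[i \dd i+d)$. Membership unpacks to $a \le i \le b-2p$ and $i+p \le i+d-1 \le b-p$; in particular $d \ge p+1$ (so the occurrence is overlapping) and $i+p+d-1 \le b$ (so both the window at $i$ and its $p$-shift lie inside $[a \dd b]$). Since $T[a \dd b]$ has period $p$, shifting by $p$ yields $T[j \dd j+d) = T[i \dd i+d) = S$, so $S$ occurs at $j$. The only remaining, and main, obstacle is to show this occurrence is \emph{consecutive}, i.e.\ that $S$ occurs at no position $k$ with $i < k < j$.

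To handle the obstacle I would first confine any such in-between occurrence inside the run: from $k \le i+p-1$ and $i+p+d-1 \le b$ one gets $k+d-1 < b$, so $T[k \dd k+d)$ lies entirely in $[a \dd b]$. It then suffices to prove a phase-alignment statement — inside the run, $S$ can start only at positions congruent to $i$ modulo $p$ — since $0 < k-i < p$ would then be impossible. This is where the run structure enters: the root $T[a \dd a+p)$ is primitive (else $T[a \dd b]$ would have a smaller period), so by the synchronization property (\cref{lem:synch}) it has $p$ pairwise distinct rotations; each length-$p$ window inside the run is the rotation of the root by its start offset modulo $p$, whence two windows of length $\ge p$ coincide only if their starts are congruent modulo $p$. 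Applying this to the equal windows $S = T[i \dd i+d)$ and $T[k \dd k+d)$ forces $k \equiv i \pmod p$, the desired contradiction. (Alternatively, one can feed the equality $T[i \dd i+p) = T[k \dd k+p)$ together with the run period $p$ into the Fine–Wilf lemma to make the primitive window $T[i \dd i+p)$ a proper power, again a contradiction.) I expect this phase-alignment step, together with the care needed to keep all relevant occurrences inside the run, to be the crux; the inequality bookkeeping in both directions is routine.
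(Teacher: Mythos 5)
Your proposal is correct and follows essentially the same route as the paper's proof: the forward direction derives that $p=j-i$ is the smallest period of $T[i\dd j+d)$ and extends this fragment to a run, and the backward direction uses the run's period to produce the occurrence at $i+p$ and the primitivity of the length-$p$ window together with the synchronization property (\cref{lem:synch}) to rule out an intermediate occurrence. Your phase-alignment phrasing via distinct rotations of the run's root is only a cosmetic variant of the paper's argument that $X=T[i\dd i+p)$ would otherwise occur internally in $X^2$.
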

\begin{proof}
$(\Rightarrow)$ If $S$ has an overlapping consecutive occurrence $(i,j)$, then the substring $F=T[i \dd j+d)$ has a border $S$, so $F$ has a period $p=j-i < d$.

We further have $|F|=j+d-i > 2(j-i)=2p$. Period $p$ is the smallest period of $F$; indeed, a period $q \in [1 \dd p)$ would imply an occurrence $T[i+q \dd i+q+d)$ of $S$ at position $i+q$ such that $i<i+q<i+p=j-i$, so $(i,j)$ would not be a consecutive occurrence of $S$.

Finally, fragment $F$ extends to a unique maximal periodic fragment with smallest period $p$; it is a run $R=(a,b,p)$ in $T$. We have $T[i \dd i+d) \in \Path(i,i+p,b-p) \subseteq \Triangle(R)$ as $i \in [a \dd b-2p]$.

$(\Leftarrow)$ Assume that $T[i \dd i+d) \in \Triangle(R)$ holds for some run $R=(a,b,p)$ and $S=T[i \dd i+d)$. Then $[i \dd i+d+p) \subseteq [a \dd b]$, so the period of the run implies that $T[i+p \dd i+p+d) = T[i\dd i+d)=S$. Moreover, $d>p$ by the definition of $\Triangle(R)$, so the two occurrences of $S$ overlap.

Finally, we need to show that $(i,j)$, for $j=i+p$, is a consecutive occurrence of $S$. If there was an occurrence $T[k \dd k+d)=S$ with $i<k<j$, then the string $X=T[i \dd i+p)$ would have an occurrence in $T[i \dd i+2p)=X^2$ being neither a prefix nor a suffix of $X^2$. String $X$ is primitive, as otherwise the run $R$ would have a period smaller than $p$. Therefore this situation is impossible by the synchronization property (\cref{lem:synch}).
\end{proof}

\renewcommand{\tabcolsep}{1pt}
\begin{figure}[htpb]
\centering
\begin{tikzpicture}
\draw (1,1*1.6) -- node[sloped,left,rotate=270] {
\begin{tabular}{c}
a\\b
\end{tabular}
} (0,0*1.6);
\draw (1,1*1.6) -- node[sloped,right,rotate=90] {
\begin{tabular}{c}
b\\a
\end{tabular}
} (2,0*1.6);
\draw (0,0*1.6)  -- node[left] {a} (0,-0.5*1.6);
\draw (0,-0.5*1.6) -- node[left] {b} (0,-1*1.6);
\draw (0,-1*1.6) -- (0,-1.5*1.6);
\draw (0,-1.3*1.6) node[left] {a};
\draw (0,-1.5*1.6) -- node[left] {b} (0,-2*1.6);
\draw (0,-2*1.6) -- (0,-2.5*1.6);
\draw (0,-2.3*1.6) node[left] {a};
\draw (0,-2.5*1.6) -- node[left] {b} (0,-3*1.6);
\begin{scope}[xshift=2cm]
\draw (0,0*1.6)  -- node[right] {b} (0,-0.5*1.6);
\draw (0,-0.5*1.6) -- node[right] {a} (0,-1*1.6);
\draw (0,-1*1.6) -- node[right] {b} (0,-1.5*1.6);
\draw (0,-1.5*1.6) -- node[right] {a} (0,-2*1.6);
\draw (0,-2*1.6) -- node[right] {b} (0,-2.5*1.6);
\end{scope}
\draw[-latex,shorten >=0.05cm,shorten <=0.05cm,densely dotted] (0,-3*1.6) -- node[sloped,below=0.15cm] {$\suf$} (2,-2.5*1.6);
\draw[-latex,shorten >=0.05cm,shorten <=0.05cm,densely dotted] (2,-2.5*1.6) -- node[sloped,below] {$\suf$} (0,-2*1.6);
\draw[-latex,shorten >=0.05cm,shorten <=0.05cm,densely dotted] (0,-2*1.6) -- node[sloped,below=0.15cm] {$\suf$} (2,-1.5*1.6);
\draw[-latex,shorten >=0.05cm,shorten <=0.05cm,densely dotted] (2,-1.5*1.6) -- node[sloped,below] {$\suf$} (0,-1*1.6);
\draw[-latex,shorten >=0.05cm,shorten <=0.05cm,densely dotted] (0,-1*1.6) -- node[sloped,below=0.15cm] {$\suf$} (2,-0.5*1.6);

\draw[blue,very thick,rounded corners=1mm] (-0.1,-0.1) rectangle (2.1,0.1);
\draw (0,0*1.6) node[above left] {\textcolor{blue}{3x}};
\draw (2,0*1.6) node[above right] {\textcolor{blue}{3x}};
\draw (3,0*1.6) node[right] {\textcolor{blue}{$\Upper(R)$}};

\begin{scope}[xshift=6.5cm,yshift=0cm,xscale=0.3]
\draw (-1,0) node[above] {$R$};
\foreach \x/\c in {1/a,2/b,3/a,4/b,5/a,6/b,7/a,8/b,9/a,10/b}{
    \draw (\x,0) node[above] {\c};
    \draw (\x,-0.4) node[above] {\tiny \x};
}
\foreach \dx in {0.5,2.5,4.5,6.5,8.5}{
    \draw[xshift=\dx cm] (0,0.5) .. controls (0.7,0.7) and (1.3,0.7) .. (2,0.5);
}
\end{scope}

\foreach \x/\y in {0/-3,2/-2.5,0/-2,2/-1.5,0/-1,2/-0.5, 0/0,2/0}{
\filldraw (\x,\y*1.6) circle (0.05cm);
}
\foreach \x/\y in {0/-2.5,0/-1.5,0/-0.5, 2/-2,2/-1}{
\filldraw (\x,\y*1.6) circle (0.03cm);
}

\begin{scope}[yscale=1.6]
\draw[yshift=1cm,red,very thick,rounded corners=1mm] (-0.0675,-2+0.1125) -- (-0.0675,-2-0.1125) -- (2+0.0675,-1.5-0.1125) -- (2+0.0675,-1.5+0.1125) -- cycle;
\draw[red,very thick,rounded corners=1mm] (-0.0675,-2+0.1125) -- (-0.0675,-2-0.1125) -- (2+0.0675,-1.5-0.1125) -- (2+0.0675,-1.5+0.1125) -- cycle;
\draw[yshift=-1cm,red,very thick,rounded corners=1mm] (-0.0675,-2+0.1125) -- (-0.0675,-2-0.1125) -- (2+0.0675,-1.5-0.1125) -- (2+0.0675,-1.5+0.1125) -- cycle;
\draw (3,-0.5) node[right] {\textcolor{red}{$\Lower(R)$}};

\draw[green!70!black,very thick,rounded corners=1mm] (-0.4,-0.4) -- (-0.4,-1.25) -- (2.3,-0.59) -- (2.3,-0.4) -- cycle;
\draw[green!70!black,very thick,rounded corners=1mm] (-0.6,-0.4) -- (-0.6,-2.28) -- (2.5,-1.56) -- (2.5,-0.4) -- cycle;
\draw[green!70!black,very thick,rounded corners=1mm] (-0.8,-0.4) -- (-0.8,-3.32) -- (2.7,-2.53) -- (2.7,-0.4) -- cycle;
\draw (3,-1) node[right] {\textcolor{green!70!black}{$\Triangle(R)$}};
\end{scope}

\end{tikzpicture}
\caption{Illustration of the sets $\Triangle(R)$, $\Upper(R)$ and $\Lower(R)$ on paths in $\CST(T)$ for a run $R=(1,10,2)$ with exponent 5. The substrings in the set $\Triangle(R)$ form a multiset being the sum of the two trapezia and a triangle. The set $\Upper(R)$ contains six fragments; three of them match substring ab, and the remaining three match ba.}
\label{fig:triangle2}
\end{figure}
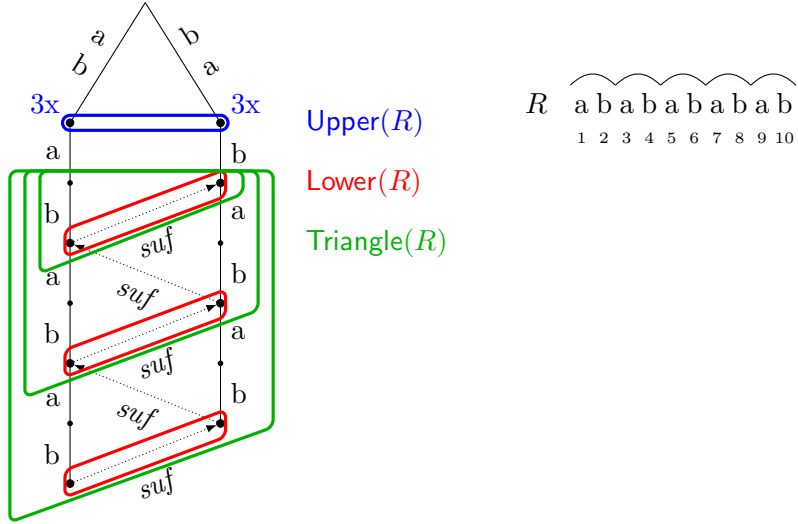
\renewcommand{\tabcolsep}{6pt}

For a run $R=(a,b,p)$, we further denote:
\begin{align*}
    \Upper(a,b,p)&=\{T[i \dd i+p)\,:\,i \in [a \dd b-2p]\}\\
    \Lower(a,b,p)&=\{T[i \dd b-p]\,:\,i \in [a \dd b-2p]\}
\end{align*}
Intuitively, $\Lower(R)$ consists of bottommost endpoints of paths $\Path$ from $\Triangle(R)$, whereas $\Upper(R)$ consists of parents of topmost endpoints of these paths. Informally, they are the ``lower side'' and the ``excluded upper side'' of the triangle; see also \cref{fig:triangle,fig:triangle2}. Below we show basic properties of these sets.

\begin{observation}\label{obs:lower_upper}
Let $R$ be a run in $T$.
\begin{enumerate}[(a)]
\item\label{ita} All fragments in $\Upper(R)$ are square halves in $T$.
\item\label{itb} The loci of fragments in $\Lower(R)$ are explicit nodes in $\ST(T)$.
\end{enumerate}
\end{observation}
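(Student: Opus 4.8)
The plan is to derive both parts directly from the two defining features of a run $R=(a,b,p)$: the periodicity $T[k]=T[k+p]$ for all $k\in[a\dd b-p]$, and the maximality condition that $b=|T|$ or $T[b+1]\ne T[b+1-p]$. Neither part needs \cref{lem:triangle} or \cref{lem:synch}; it is purely a matter of reading off what the period and the run boundaries force.

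For part (\ref{ita}), I would fix $i\in[a\dd b-2p]$ and set $X=T[i\dd i+p)$. Since $i\le b-2p$, the fragment $T[i\dd i+2p)$ ends at index $i+2p-1\le b-1$, so it lies entirely inside the periodic region $[a\dd b]$ and therefore has period $p$. A string of length $2p$ with period $p$ equals $X^2$, so $X^2=T[i\dd i+2p)$ is a substring of $T$ and $X$ is a square half; this is exactly the claim.

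For part (\ref{itb}), I would fix $i\in[a\dd b-2p]$ and set $S=T[i\dd b-p]$. The key observation is that the periodicity produces a second occurrence of $S$ shifted by $p$: applying $T[k]=T[k+p]$ for $k\in[i\dd b-p]$ gives $T[i+p\dd b]=S$. Thus $S$ occurs both at position $i$, where the next character is $T[b-p+1]$, and at position $i+p$, where the occurrence ends exactly at $b$ and the next character is $T[b+1]$ (or the end marker $\#$ when $b=|T|$). Because $b-p+1=b+1-p$, the run's maximality condition gives $T[b+1]\ne T[b-p+1]$, and the end-marker case is immediate since $\#\notin\Sigma$. Hence the two occurrences of $S$ are followed by distinct characters, so $S$ is right-branching and its $\locus$ is an explicit (branching) node of $\ST(T)$.

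The routine points---bounds such as $i+2p-1\le b$ and $b-p+1\le n$---are straightforward from $i\in[a\dd b-2p]$ and $p\ge 1$. The only genuinely delicate step is the branching argument in part (\ref{itb}): one must pin down that the two occurrences end at positions $b-p$ and $b$ precisely so that their following characters are the two sides of the maximality condition, and then separately dispatch the boundary case $b=|T|$, where one ``following character'' is the sentinel rather than a letter of $\Sigma$. I expect this small case analysis, rather than any computation, to be where care is needed.
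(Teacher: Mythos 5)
Your proposal is correct and follows essentially the same route as the paper: part (a) via the periodicity of $T[i\dd i+2p)\subseteq T[a\dd b]$, and part (b) by exhibiting the second occurrence $T[i+p\dd b]$ and using the right-maximality condition $T[b+1]\ne T[b+1-p]$ (with the sentinel $\#$ handling the case $b=|T|$) to show the locus is branching. The case analysis you flag as delicate is exactly the one the paper performs.
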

\begin{proof}
\eqref{ita} By the periodicity of run $R$, each fragment $T[i \dd i+p) \in \Upper(R)$ is followed by a matching fragment $T[i+p \dd i+2p)$. This is because $i \le b-2p$. Hence, $T[i \dd i+2p)$ is indeed a square in $T$.

\eqref{itb} Let $T[i \dd b-p] \in \Lower(R)$ and $c=T[b+1-p]$. The period of the run implies that $T[i \dd b-p]=T[i+p \dd b]$.

If $T[i \dd b-p]$ is a suffix of $T$, its locus in $\ST(T)$ is explicit as the locus has children along the characters $c$ and $\#$.

Otherwise, character $c'=T[b+1]$ is different from $c$ by the right maximality of the run $R$. Hence, $T[i \dd b-p]c$ and $T[i \dd b-p]c'$ are different substrings of $T$, as claimed.
\end{proof}

Let us note that if $\expon(R)>3$, then each of the sets $\Upper(R)$, $\Triangle(R)$ may contain matching fragments; see \cref{fig:triangle2}.

\subsection{Counting overlapping consecutive occurrences}\label{ss:o}
For each explicit node $v$ of $\CST(T)$, instead of $\n(v)$, we will compute the number $\o(v)$ of overlapping consecutive occurrences of the substring $\bar{v}$ in $T$.

For a set $\mathcal{F}$ of fragments of $T$, we denote
by $\#_v(\mathcal{F})=|\{T[i \dd j] \in \mathcal{F}\,:\,\bar{v }=T[i \dd j]\}|$ the number of fragments in $\mathcal{F}$ that match $\bar{v}$.
\cref{lem:triangle} implies the following formula for $\o(v)$.
\begin{observation}\label{obs:ovTriangle}
For a node $v$ of $\CST(T)$,
$\o(v)=\sum_{R \in \R(T)} \#_v(\Triangle(R))$.
\end{observation}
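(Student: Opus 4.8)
The plan is to prove the equality by exhibiting a bijection between the objects counted on the two sides. Writing $d=|\bar v|$, the left-hand side $\o(v)$ is the number of overlapping consecutive occurrences $(i,j)$ of $\bar v$ in $T$, while by the definition of $\#_v$ the right-hand side counts pairs $(R,F)$ with $R\in\R(T)$, $F\in\Triangle(R)$, and $F=\bar v$ as strings, a pair being counted once for each run whose triangle contains a matching fragment. So it suffices to set up a bijection between overlapping consecutive occurrences of $\bar v$ and such pairs.

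First I would define the forward map using the forward direction of \cref{lem:triangle} applied with $S=\bar v$: given an overlapping consecutive occurrence $(i,j)$, that argument shows the fragment $T[i\dd j+d)$ has smallest period $p=j-i$ and extends to a unique run $R=(a,b,p)$, with $T[i\dd i+d)\in\Path(i,i+p,b-p)\subseteq\Triangle(R)$. Since $(i,j)$ is an occurrence of $\bar v$, the fragment $T[i\dd i+d)$ matches $\bar v$, so the assignment $(i,j)\mapsto\bigl(R,\,T[i\dd i+d)\bigr)$ lands in the set counted on the right. For the inverse I would use the backward direction of \cref{lem:triangle}: from a pair $(R,F)$ with $R=(a,b,p)$ and $F=T[i\dd i+d)\in\Triangle(R)$ matching $\bar v$, it returns the overlapping consecutive occurrence $(i,i+p)$. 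These maps compose to the identity in the easy direction, since the run produced from $(i,j)$ has period $p=j-i$, so the backward map recovers $(i,i+(j-i))=(i,j)$.

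The hard part will be checking that the right-hand side does not overcount, i.e.\ that the forward map is well defined and the correspondence is genuinely one-to-one. Concretely, I must rule out that a single overlapping occurrence is charged to two different runs, equivalently that one positioned fragment $T[i\dd i+d)$ lies in two distinct triangles $\Triangle(R_1)$ and $\Triangle(R_2)$. This is exactly where the uniqueness already present in \cref{lem:triangle} is used: if $T[i\dd i+d)\in\Triangle(R)$ for a run of period $p$, then by the backward direction the next occurrence of $\bar v$ strictly after position $i$ is at $i+p$; since that next occurrence is unique, the period $p$ is determined by the fragment alone, and then the maximal periodic fragment of period $p$ through $T[i\dd i+d)$ is by definition the unique run $R$. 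Distinct fragments inside a single triangle have distinct starting positions $i$ (one fragment of each length per start), hence correspond to distinct occurrences, and distinct runs contribute disjoint pairs. This makes the forward map a bijection onto the set counted on the right, and summing over all runs yields $\o(v)=\sum_{R\in\R(T)}\#_v(\Triangle(R))$, as claimed.
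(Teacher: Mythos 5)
Your argument is correct and matches the paper's route: the paper derives this observation directly from \cref{lem:triangle} (offering no further proof), and your bijection --- forward map via the unique run extending $T[i \dd j+d)$, backward map via the backward direction of the lemma, with injectivity from the uniqueness of the next occurrence and of the maximal periodic extension --- is exactly the implication the paper leaves implicit, spelled out carefully.
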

We will show how to efficiently evaluate these formulas for all explicit nodes $v$ simultaneously.

For each explicit node $v$ of $\CST(T)$ we will compute two counters:
\[C_{upper}[v]=\sum_{R \in \R(T)} \#_v(\Upper(R)),\quad
C_{lower}[v]=\sum_{R \in \R(T)} \#_v(\Lower(R)).\]
That is, $C_{upper}[v]$ ($C_{lower}[v]$) stores the number of times fragments matching the substring $\bar{v}$ occur in $\Upper(R)$ ($\Lower(R)$, respectively) over all runs $R \in \R(T)$.

\newcommand{\subtree}{\mathit{subtree}}
\newcommand{\rroot}{\mathit{root}}
For an explicit node $v$ of $\CST(T)$, by $\subtree(v)$ we denote the set of explicit descendants of $v$ in the tree (including $v$).
The following lemma shows how to compute $\o$ from the counters $C_{upper}$ and $C_{lower}$.
The lemma follows by \cref{obs:ovTriangle}.

\allowdisplaybreaks
\begin{lemma}\label{lem:formula}
For an explicit node $v$ of the $\CST(T)$, we have
\[\o(v)=\sum_{w \in \subtree(v)}(C_{lower}[w]-C_{upper}[w]).\]
\end{lemma}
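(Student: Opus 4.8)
The plan is to prove the identity run by run, then within each run path by path, and finally reassemble by summation. First, by \cref{obs:ovTriangle} we have $\o(v)=\sum_{R\in\R(T)}\#_v(\Triangle(R))$, while unfolding the definitions of $C_{upper}$ and $C_{lower}$ turns the claimed right-hand side into $\sum_{R\in\R(T)}\sum_{w\in\subtree(v)}\bigl(\#_w(\Lower(R))-\#_w(\Upper(R))\bigr)$. Since both expressions are sums over $R\in\R(T)$, it suffices to fix one run $R=(a,b,p)$ and establish $\#_v(\Triangle(R))=\sum_{w\in\subtree(v)}\bigl(\#_w(\Lower(R))-\#_w(\Upper(R))\bigr)$. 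I would then decompose $\Triangle(R)=\bigcup_{i=a}^{b-2p}\Path(i,i+p,b-p)$ into single paths indexed by the starting position $i$ (a disjoint union of positioned fragments, since distinct $i$ give distinct starts), observing that $\Lower(R)$ and $\Upper(R)$ are indexed by the same range of $i$. Thus it is enough to prove, for each fixed $i\in[a\dd b-2p]$, the per-path identity $\#_v\bigl(\Path(i,i+p,b-p)\bigr)=\sum_{w\in\subtree(v)}\bigl([\bar w=T[i\dd b-p]]-[\bar w=T[i\dd i+p)]\bigr)$, where $[\cdot]$ is the Iverson bracket.

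For the per-path identity I would set $d=|\bar v|$. The left-hand side is $1$ precisely when $\bar v=T[i\dd i+d)$ and $d\in[p+1\dd b-p-i+1]$, and $0$ otherwise, because the fragments of a single path all start at $i$ and hence have pairwise distinct lengths. For the right-hand side, the crucial input is \cref{obs:lower_upper}: the fragment $T[i\dd b-p]\in\Lower(R)$ has an explicit locus in $\CST(T)$ by part (b), and the fragment $T[i\dd i+p)\in\Upper(R)$ is a square half, so its locus is explicit in $\CST(T)$ by part (a). Consequently each of the two subtree sums---ranging only over explicit nodes $w\in\subtree(v)$---detects its target: $\sum_{w}[\bar w=T[i\dd b-p]]=[\bar v\text{ is a prefix of }T[i\dd b-p]]=[\bar v=T[i\dd i+d)\text{ and }d\le b-p-i+1]$, and likewise $\sum_{w}[\bar w=T[i\dd i+p)]=[\bar v=T[i\dd i+d)\text{ and }d\le p]$. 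Subtracting, the right-hand side is $1$ exactly when $\bar v=T[i\dd i+d)$ and $p<d\le b-p-i+1$; since $i\le b-2p$ forces $b-p-i+1\ge p+1$, this interval is nonempty and the difference never goes negative, so it coincides with the condition characterizing the left-hand side.

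Finally I would sum. Summing the per-path identity over $i\in[a\dd b-2p]$ yields the per-run identity, using $\sum_i[\bar w=T[i\dd b-p]]=\#_w(\Lower(R))$ and $\sum_i[\bar w=T[i\dd i+p)]=\#_w(\Upper(R))$; summing the per-run identity over $R\in\R(T)$ and exchanging the order of summation gives the lemma. I expect the main obstacle to be the per-path identity, and specifically the explicitness argument: the whole telescoping works only because the two endpoints of every path have explicit loci in $\CST(T)$, so that indicator sums restricted to explicit $w\in\subtree(v)$ faithfully register them---this is exactly the role of \cref{obs:lower_upper}, and it is why the construction operates on $\CST(T)$ rather than on the plain suffix tree. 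A secondary subtlety to handle carefully is the one-character offset encoded in $\Upper(R)$: it records the length-$p$ fragment sitting one symbol above the shortest (length-$(p{+}1)$) fragment of the path, which is precisely what makes the two prefix indicators telescope across consecutive lengths. When $\expon(R)>3$ several paths may coincide on shared nodes, but since everything is phrased through the multiset counters $\#$, this causes no double-counting issues.
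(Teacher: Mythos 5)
Your proof is correct and follows essentially the same route as the paper's: both reduce via \cref{obs:ovTriangle} to a per-run, per-path telescoping in which membership of $v$ on the path from $\locus(T[i\dd i+p])$ down to $\locus(T[i\dd b-p])$ is rewritten as a difference of two subtree-membership (prefix) indicators, with \cref{obs:lower_upper} guaranteeing that the $\Lower$ and $\Upper$ loci are explicit so the subtree sums detect them. Your per-path Iverson-bracket formulation is just a more granular presentation of the paper's set-cardinality telescoping, and your side remarks on the one-character offset of $\Upper(R)$ and on multiset counting for $\expon(R)>3$ are accurate.
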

\begin{proof}
If node $x$ is an ancestor of node $y$, by $x \rightsquigarrow y$ we denote the set of explicit nodes on the path from $x$ to $y$.
By $\rroot$ we denote the root of $\CST(T)$.
By the definitions of $\Triangle(R)$, $\Upper(R)$ and $\Lower(R)$ and \cref{obs:ovTriangle}, we have:
\begin{align*}
\o(v)&=\sum_{R \in \R(T)} \#_v(\Triangle(R))\\
&= \sum_{(a,p,b) \in \R(T)} \sum_{i=a}^{b-2p} \#_v(\Path(i,i+p,b-p))\\
&= \sum_{(a,p,b) \in \R(T)} |\{i \in [a \dd b-2p]\,:\, v \in  (\locus(T[i \dd i+p]) \rightsquigarrow \locus(T[i \dd b-p]))\}|\\
&= \sum_{(a,p,b) \in \R(T)} |\{i \in [a \dd b-2p]\,:\, v \in  (\rroot \rightsquigarrow \locus(T[i \dd b-p]))\}|\\
&\quad -\sum_{(a,p,b) \in \R(T)} |\{i \in [a \dd b-2p]\,:\, v \in  (\rroot \rightsquigarrow \locus(T[i \dd i+p)) \}|\\
&= \sum_{R \in \R(T)} \sum_{w \in \subtree(v)} \#_w(\Lower(R)) - \sum_{R \in \R(T)} \sum_{w \in \subtree(v)} \#_w(\Upper(R))\\
&=\sum_{w \in \subtree(v)} (C_{lower}[w]-C_{upper}[w]).\qedhere
\end{align*}
\end{proof}

\subsubsection{Computing $C_{lower}$ and $C_{upper}$}
Let us recall that $\ST'(T)$ is the tree of suffix links of $\ST(T)$.

\begin{observation}\label{obs:lower}
For each run $R$ in $T$, $\Lower(R)$ forms a path in $\ST'(T)$.
\end{observation}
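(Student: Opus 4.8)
The plan is to display the elements of $\Lower(R)$ directly as a chain of nodes linked by suffix links. Write $R=(a,b,p)$ and, for each $i \in [a \dd b-2p]$, put $S_i = T[i \dd b-p]$, so that $\Lower(R) = \{S_i : i \in [a \dd b-2p]\}$. Since the right endpoint $b-p$ is fixed while the left endpoint grows with $i$, the lengths $|S_i| = b-p-i+1$ strictly decrease; in particular the $S_i$ are pairwise distinct substrings, so their loci are pairwise distinct nodes. This distinctness will guarantee that whatever suffix-link edges we find among them cannot close up into a cycle.

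The key step is the elementary observation that consecutive elements differ by exactly one leading character: $S_{i+1} = T[i+1 \dd b-p]$ is obtained from $S_i = T[i \dd b-p]$ by deleting the first character $T[i]$. Writing $S_i = cX$ with $c = T[i]$ and $X = S_{i+1}$, the definition of the suffix link gives $\suf(\locus(S_i)) = \locus(X) = \locus(S_{i+1})$. Hence there is a suffix-link edge joining $\locus(S_i)$ and $\locus(S_{i+1})$ for every $i \in [a \dd b-2p-1]$, which is precisely what it means for these nodes to be chained in $\ST'(T)$.

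To promote this chain of edges to a genuine path of $\ST'(T)$, I would invoke \cref{obs:lower_upper}\eqref{itb}, which says each $\locus(S_i)$ is an explicit node of $\ST(T)$; consequently each $\locus(S_i)$ is a vertex of $\ST'(T)$ and each suffix-link edge above is an edge of $\ST'(T)$. Together with the distinctness of the nodes established in the first paragraph, the sequence $\locus(S_a), \locus(S_{a+1}), \ldots, \locus(S_{b-2p})$ is a simple path in $\ST'(T)$, as claimed. I do not anticipate a real obstacle here: the only point requiring care is that every $\locus(S_i)$ actually appears in $\ST'(T)$, i.e.\ is explicit, and this is exactly what \cref{obs:lower_upper}\eqref{itb} supplies; were that not available one would instead have to reprove explicitness from the right-maximality of $R$ at position $b+1$.
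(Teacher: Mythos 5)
Your proof is correct and matches the paper's intended reasoning: the paper states this observation without proof, and the argument it implicitly relies on is exactly yours --- the fragments $T[i \dd b-p]$ are suffixes of one another of strictly decreasing length, consecutive ones are joined by suffix links, and \cref{obs:lower_upper}\eqref{itb} guarantees all their loci are explicit, hence vertices of $\ST'(T)$. Nothing is missing.
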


\begin{lemma}\label{lem:clower}
The counters $C_{lower}[v]$ for all explicit nodes $v$ of $\CST(T)$ can be computed in $\Oh(n)$ time.
\end{lemma}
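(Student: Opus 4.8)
The plan is to recognize $C_{lower}$ as a \emph{path-covering count} on the suffix-link tree $\ST'(T)$ and to evaluate it by a difference-array trick together with a single bottom-up traversal, using weighted ancestor queries to locate the two endpoints of each path.

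First I would make the reduction explicit. Fix a run $R=(a,b,p)$. The fragments of $\Lower(R)$ all end at position $b-p$ and start at $a,a+1,\dots,b-2p$, so they have pairwise distinct lengths and hence $\#_v(\Lower(R))\in\{0,1\}$ for every node $v$. By \cref{obs:lower_upper}\eqref{itb} each such fragment has an explicit locus in $\ST(T)$, and by \cref{obs:lower} these loci form a path in $\ST'(T)$: since $\suf(\locus(T[i\dd b-p]))=\locus(T[i+1\dd b-p])$, the path runs from $x_R:=\locus(T[a\dd b-p])$ up along suffix links to $y_R:=\locus(T[b-2p\dd b-p])$, whose label has length $p+1$. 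Orienting $\ST'(T)$ as a tree rooted at the root of $\ST(T)$, so that the parent of $w$ is $\suf(w)$, a node is an ancestor of another exactly when its label is a suffix of the other's label; as $\bar{y_R}$ is the length-$(p+1)$ suffix of $\bar{x_R}$, the node $y_R$ is an ancestor of $x_R$ and the path is precisely the set of nodes between them. Thus $\#_v(\Lower(R))=1$ iff $v$ lies on $y_R\rightsquigarrow x_R$, and
\[C_{lower}[v]=\bigl|\{\,R\in\R(T)\,:\, v\in(y_R\rightsquigarrow x_R)\,\}\bigr|.\]
In particular $C_{lower}[v]=0$ for every explicit node of $\CST(T)$ that is implicit in $\ST(T)$, since no $\Lower$-fragment can match its label, so such nodes need no work.

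Next I would evaluate these at most $n$ vertical-path counts by the standard difference trick on $\ST'(T)$. I initialise $\delta[\cdot]=0$ over the explicit nodes of $\ST(T)$; for each run I set $\delta[x_R]\mathrel{+}=1$ and $\delta[\suf(y_R)]\mathrel{-}=1$ (here $|\bar{y_R}|=p+1\ge 2$, so $\suf(y_R)$ is a well-defined explicit node by the preliminaries). A single post-order traversal of $\ST'(T)$ then produces $C_{lower}[v]=\sum_{u\in\subtree(v)}\delta[u]$, where $\subtree(\cdot)$ is taken in $\ST'(T)$. Correctness is the usual telescoping: for $v$ on $y_R\rightsquigarrow x_R$ its subtree contains $x_R$ but not $\suf(y_R)$ (whose label is strictly shorter than $\bar v$), contributing $+1$; for $v$ equal to or above $\suf(y_R)$ the subtree contains both endpoints and they cancel; otherwise it contains neither. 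Summing over runs gives exactly the displayed count.

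The step I expect to be the crux is locating the endpoints $x_R,y_R$ of all runs in linear total time, which is handled by weighted ancestor queries (\cref{thm:WAQ}). Letting $\ell_k$ denote the leaf of $\ST(T)$ for the suffix $T[k\dd n]$, I would take $x_R=\WA(\ell_a,\,b-p-a+1)$ and $y_R=\WA(\ell_{b-2p},\,p+1)$; since both loci are explicit by \cref{obs:lower_upper}\eqref{itb}, each query returns the locus itself rather than a proper ancestor, and $\suf(y_R)$ is read off the precomputed suffix links (explicit because suffix links of explicit nodes are explicit). Over all runs this is a batch of $\Oh(n)$ weighted ancestor queries, answered in $\Oh(n)$ time by \cref{thm:WAQ}. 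With $\ST(T)$ and $\ST'(T)$ built in $\Oh(n)$ time (\cref{thm:ST}) and $\R(T)$ computed in $\Oh(n)$ time (\cref{thm:runs}), all remaining work, namely the $\Oh(1)$ updates per run and the single traversal, is $\Oh(n)$, yielding the claimed bound.
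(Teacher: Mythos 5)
Your proposal is correct and follows essentially the same route as the paper: interpret $C_{lower}[v]$ as the number of vertical paths $\Lower(R)$ in $\ST'(T)$ covering $v$, locate the two endpoints of each path via a batch of weighted ancestor queries, apply $\pm 1$ difference counters at the bottom endpoint and at the suffix link of the top endpoint, and aggregate with one bottom-up traversal. The extra details you supply (distinct lengths within $\Lower(R)$, explicitness of the loci, and the vanishing of $C_{lower}$ on $\CST(T)$-nodes that are implicit in $\ST(T)$) are all consistent with the paper's argument.
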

\begin{proof}
By the observation, $C_{lower}[v]$ is simply the number of paths $\Lower(R)$ that cover node $v$ in $\ST'(T)$ (in particular, no two fragments in a single set $\Lower(R)$ match).

To count paths covering each node in a rooted tree we apply a standard approach using $\pm1$ counters. Initially all counters $C_{lower}[v]$ are equal to 0. For each run $R=(a,b,p) \in \R(T)$, we increment $C_{lower}[v]$ for the bottom endpoint $v$ of the path $\Lower(R)$ ($v=\locus(T[a \dd b-p])$) and decrement $C_{lower}[u]$ for the parent $u$ in $\ST'(T)$ of the top endpoint of $\Lower(R)$ ($u=\locus(T[b-2p+1 \dd b-p])$). In the end for each node $u$ of $\ST'(T)$ in a bottom-up order, we add $C_{lower}[v]$ to $C_{lower}[u]$ for all children $v$ of $u$ in $\ST'(T)$.

Let us summarize and analyze the complexity of the algorithm. Tree $\ST'(T)$ has $\Oh(n)$ nodes. By \cref{thm:runs}, there are at most $n$ paths $\Lower(R)$ and all runs $R$ can be computed in $\Oh(n)$ time. The endpoints of all paths $\Lower(R)$ can be located in $\ST'(T)$ in $\Oh(n)$ time using weighted ancestor queries in $\ST(T)$ (\cref{thm:WAQ}). Finally, the bottom-up traversal of the tree $\ST'(T)$ takes $\Oh(n)$ time. 
\end{proof}

We proceed to computing counters $C_{upper}$.
Let us define an operation $\rot$ such that $\rot(cX)=Xc$ for a string $X$ and character $c \in \Sigma$.
For $k \in \mathbb{Z}_{\ge 0}$, by $\rot^k(S)$ we denote the composition of $\rot$ $k$ times. If $S'=\rot^k(S)$ for some strings $S,S'$ and $k \in \mathbb{Z}_{\ge 0}$, we say that $S'$ is a \emph{cyclic rotation} of $S$. We also say that $S$ and $S'$ are \emph{cyclically equivalent}.

For each run $R$ in $T$, the strings in $\Upper(R)$ are cyclic rotations of each other.
This motivates introduction of the following directed graph $G=(V,E)$.
The set of vertices is $V=\Sq(T)$ and the arcs are defined as follows: $(S,S') \in E$ if and only if $S,S' \in V$ and $S'=\rot(S)$. Instead of addressing vertices of $G$ by substrings of $T$, we will address them by their loci in $\CST(T)$ which are explicit nodes of $\CST(V)$. 

\begin{observation}\label{obs:upper_walk}
For each run $R$ in $T$, $\Upper(R)$ corresponds to a (directed) walk in $G$.
\end{observation}

Let us note that the vertices (and arcs) on the walk $\Upper(R)$ may repeat if $\expon(R)>3$ (see \cref{fig:triangle2} again). In particular, in this case $\Upper(R)$ is contained in a cycle in $G$.

We proceed to the construction of graph $G$. More precisely, a sufficient subset of arcs of $G$ is constructed.

\begin{lemma}\label{lem:G}
A subset $E'$ of $E$ containing all arcs that belong to any walk $\Upper(R)$, for $R \in \R(T)$, can be constructed in $\Oh(n)$ time.
\end{lemma}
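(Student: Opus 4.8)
The plan is to exploit that $G$ is a functional graph. Since $\rot$ is a function, every vertex $S \in \Sq(T)$ has at most one outgoing arc, namely $(S,\rot(S))$, and it is present exactly when $\rot(S) \in \Sq(T)$. Hence $|E| \le |V| = |\Sq(T)| = \Oh(n)$ by \cref{thm:squares}, so I can afford to compute the \emph{entire} arc set $E$ and simply take $E' = E$. This is trivially a subset of $E$, and it contains every walk arc: for a run $R=(a,b,p)$ and $j \in [a \dd b-2p-1]$, the consecutive pair $(T[j \dd j+p),\,T[j+1 \dd j+1+p))$ of $\Upper(R)$ is a genuine arc of $G$, because both strings are square halves (\cref{obs:lower_upper}\eqref{ita}) and the period of $R$ gives $T[j]=T[j+p]$, whence $T[j+1 \dd j+1+p)=\rot(T[j \dd j+p))$. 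Thus it suffices to compute, for every vertex, its $\rot$-successor.

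First I would compute $\R(T)$ (\cref{thm:runs}), the vertex set $\Sq(T)$ with each square half represented by one occurrence (\cref{thm:squares}), and make all these loci explicit in $\CST(T)$ (\cref{lem:CSTstructure}), marking each square-half node. For a vertex $S=T[i \dd i+p)$ I compute $\locus(\rot(S))$ in two steps. Deleting the first character $c=T[i]$ leaves $X=T[i+1 \dd i+p)$, whose locus is obtained by the weighted-ancestor query $\WA(\ell,p-1)$, where $\ell$ is the leaf of $T[i+1 \dd n]$; appending $c$ then amounts to descending one character from $\locus(X)$ to reach $\locus(Xc)=\locus(\rot(S))$, provided $Xc$ occurs in $T$. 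I record the arc $(S,\rot(S))$ precisely when this descent succeeds and lands on a marked square-half node, so that $\rot(S)\in\Sq(T)$. The point of working vertex-by-vertex, rather than enumerating the positions of each run, is efficiency: the latter would reprocess $\sum_{R}(\expon(R)-2)\,p$ positions, a quantity equal to the number of occurrences of primitively-rooted squares, which can be $\Theta(n\log n)$.

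The main obstacle is carrying out all these successor computations within $\Oh(n)$ total time. The $\Oh(n)$ weighted-ancestor queries are answered in a single batch by \cref{thm:WAQ}. The one-character descents, together with the membership tests ``is $\locus(\rot(S))$ a square half?'', are resolved offline: I collect all pairs (source node, character $c$) and radix-sort them against the outgoing edges of $\ST(T)$ and against the list of square-half loci, which lets each descent and each test be answered in amortized $\Oh(1)$, hence $\Oh(n)$ overall. A final radix sort of the produced pairs $(\locus(S),\locus(\rot(S)))$ removes duplicates and yields $E'=E$ in $\Oh(n)$ time. If one prefers $E'$ to consist of only the walk arcs, the same machinery records an arc only for sources $S$ that actually occur at an internal position of a run of period $|S|$; this restriction is read off from the run data and leaves the asymptotics unchanged.
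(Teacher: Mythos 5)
Your argument is correct, but it takes a somewhat different route from the paper's. The paper does not compute the whole arc set $E$: it iterates over the $\Oh(n)$ distinct \emph{squares} $T[i \dd i+2d)$ of $T$ and, for each, adds the arc from $\locus(T[i \dd i+d))$ to $\locus(T[i+1 \dd i+d])$ whenever the latter fragment is in $\Sq(T)$. The point of organizing the computation around squares is that inside a square the rotation of the left half appears \emph{verbatim} as the adjacent fragment $T[i+1 \dd i+d]$, so a single weighted ancestor query on the leaf of $T[i+1\dd n]$ lands directly on $\locus(\rot(T[i \dd i+d)))$ and no ``descend by one character'' step is ever needed; the resulting $E'$ may be a proper subset of $E$, but every consecutive pair of a walk $\Upper(R)$ is witnessed by the square $T[j \dd j+2p)$, exactly as in your own verification, so all walk arcs are present. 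You instead iterate over the vertices and compute the $\rot$-successor of each generically, which forces you to handle the case where $\rot(S)$ need not occur adjacent to the chosen occurrence of $S$: hence the extra one-character descent, resolved offline by radix-sorting (node, character) pairs against the outgoing edges. That machinery is sound (note that when $\locus(X)$ is implicit the descent is just an $\Oh(1)$ character comparison on the edge, and the sort is only needed at explicit loci), and what it buys you is the canonical graph $E'=E$, independent of which occurrences are enumerated. Both outputs are legitimate for the lemma and for its use in \cref{lem:cupper}: since $\rot$ is a length-preserving bijection, every vertex has in- and out-degree at most one in $E$, so any $E'\subseteq E$ containing the walk arcs is a disjoint union of paths and cycles, and the paper explicitly tolerates arcs of $E'$ lying on no walk $\Upper(R)$. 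Your closing remark about avoiding the $\Theta(n\log n)$ cost of enumerating all positions of all runs identifies the right pitfall; the paper sidesteps it the same way, by enumerating distinct squares rather than run positions.
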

\begin{proof}
For each distinct square substring $T[i \dd i+2d)$ of $T$, we insert into $E'$ an arc that leads from $\locus(T[i \dd i+d))$ to $\locus(T[i+1 \dd i+d])$ if $T[i+1 \dd i+d] \in \Sq(T)$; the latter condition can be checked from the construction of the tree structure of $\CST(T)$ (\cref{lem:CSTstructure}). By \cref{thm:squares}, square substrings of $T$ can be enumerated in $\Oh(n)$ time. Then we use off-line weighted ancestor queries (\cref{thm:WAQ}) on $\CST(T)$ to find the desired loci. This concludes that the time complexity is $\Oh(n)$. Now let us argue for the correctness of this algorithm in two steps.

Why $E' \subseteq E$: When adding an arc from $\locus(T[i \dd i+d))$ to $\locus(T[i+1 \dd i+d])$, we know that $T[i \dd i+d) \in \Sq(T)$ and we check if $T[i+1 \dd i+d] \in \Sq(T)$. Hence, an arc connects two vertices of $V=\Sq(T)$. Finally, we have
$\rot(T[i \dd i+d))=T[i+1 \dd i+d]$
because $T[i \dd i+2d)$ is a square. Consequently, $E' \subseteq E$.

Why all arcs that belong to any walk $\Upper(R)$ are in $E'$: Let $T[i \dd i+p),T[i+1 \dd i+p] \in \Upper(a,b,p)$ be two consecutive elements. Then $a \le i < b-2p$, so $T[i \dd i+2p)$ is a square. Therefore, $(\locus(T[i \dd i+p)),\locus(T[i+1 \dd i+p])) \in E'$ by definition.
\end{proof}

In the lemma above, it can be the case that $E' \subsetneq E$ if there are two substrings $S,S' \in \Sq(T)$ such that $S'=\rot(S)$ but there are no two fragments $T[i \dd i+|S|)$, $T[i+1 \dd i+1+|S|)$ matching $S$ and $S'$, respectively. Moreover, it can happen that $E'$ contains an arc that does not belong to any walk $\Upper(R)$. Indeed, when an arc from $\locus(T[i \dd i+d))$ to $\locus(T[i+1 \dd i+d])$ is added to $E'$, we avoid the unnecessary check if $T[i \dd i+d)$, $T[i+1 \dd i+d]$ belong to a set $\Upper(R)$ for any run $R$.

\newcommand{\nnext}{\mathit{next}}
\newcommand{\id}{\mathit{id}}
\begin{lemma}\label{lem:cupper}
The counters $C_{upper}[v]$ for all explicit nodes $v$ of $\CST(T)$ can be computed in $\Oh(n)$ time.
\end{lemma}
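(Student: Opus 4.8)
The plan is to compute the counters $C_{upper}$ by exploiting the structure of the graph $G$ from \cref{lem:G}, whose arc set $E'$ has just been constructed in $\Oh(n)$ time. Recall from \cref{obs:upper_walk} that each $\Upper(R)$ is a directed walk in $G$, and that the strings along any such walk are cyclic rotations of one another. Since $C_{upper}[v]=\sum_{R}\#_v(\Upper(R))$, the task is to count, for each vertex $v$, the total number of times $v$ is visited across all the walks $\Upper(R)$, $R\in\R(T)$. The natural first step is to classify the walks by whether $\expon(R)\le 3$ or $\expon(R)>3$. When $\expon(R)\le 3$ the walk is simple (no repeated vertices), so it behaves exactly like the path situation handled for $C_{lower}$, and the same $\pm1$-difference trick along the walk would suffice. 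The genuinely new difficulty is the case $\expon(R)>3$, where \cref{obs:upper_walk} tells us the walk wraps around a \emph{cycle} in $G$ and visits vertices multiple times; here a path-difference scheme breaks down because there is no tree structure to accumulate along.

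First I would establish the cycle structure of $G$ precisely: because $\rot$ is a bijection on cyclic-equivalence classes and each vertex $S\in\Sq(T)$ has out-degree and in-degree at most one among the relevant arcs (a rotation $\rot(S)$ and its inverse are uniquely determined), the subgraph induced by the arcs of $E'$ on each cyclic-equivalence class is a union of simple paths and at most one simple cycle. The key combinatorial fact I would isolate is that all strings in $\Upper(R)$ share one length $p$ and one cyclic-equivalence class, and that $\Upper(R)$ traverses $b-2p+1=(\expon(R)-2)p+1$ consecutive vertices starting from $\locus(T[a\dd a+p))$, wrapping around the length-$p$ cycle of its class whenever $\expon(R)>3$. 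Thus for each run I can determine, in $\Oh(1)$ time after locating its starting and ending vertices via the weighted-ancestor machinery (\cref{thm:WAQ}), both a starting position on the cycle/path and a traversal length.

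The main step is then to aggregate these traversals. For each cyclic-equivalence class that forms a cycle $v_0\to v_1\to\cdots\to v_{\ell-1}\to v_0$ of length $\ell$, a single run contributes to each vertex either $\lfloor L/\ell\rfloor$ or $\lceil L/\ell\rceil$ visits, where $L$ is the run's traversal length, with the ``$+1$'' portion applied to a contiguous arc of the cycle starting at the run's entry vertex. I would split each run's contribution into a uniform part (add $\lfloor L/\ell\rfloor$ to every vertex of the cycle, realizable by a single counter attached to the whole class and distributed at the end) and a boundary part (a contiguous block of $L\bmod\ell$ vertices each receiving one extra visit). The boundary part is a range increment on a cyclic array, which I encode with the usual difference-array (prefix-sum) technique, taking care of wrap-around by splitting a wrapping range into two ordinary ranges. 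For the path components (where no wrapping occurs) the difference-array trick applies directly without the modular bookkeeping. Summing over all runs, the total work is $\Oh(n)$ because there are $\Oh(n)$ runs and the total length of all cycles and paths is $\Oh(|V|)=\Oh(n)$, and each run triggers only $\Oh(1)$ difference-array updates; a final linear sweep converts the difference arrays and per-class uniform counters into the values $C_{upper}[v]$.

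The hard part will be handling the wrapped walks of high-exponent runs correctly: one must argue that $\Upper(R)$ really does stay within a single cycle of $G$ and visits its vertices in the uniform ``$\lfloor L/\ell\rfloor$ or $\lceil L/\ell\rceil$'' pattern, which rests on the fact that all elements of $\Upper(R)$ are genuine square halves (\cref{obs:lower_upper}\eqref{ita}) and hence present as vertices of $G$, and on the primitivity of the run's period (so the cycle length $\ell$ equals exactly $p$). Once this structural claim is in place, the remaining bookkeeping — locating entry vertices by weighted-ancestor queries, the modular range-increment via difference arrays, and the final linear sweep — is routine and fits within the $\Oh(n)$ budget.
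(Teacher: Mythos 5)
Your proposal is correct and follows essentially the same route as the paper: both exploit that $G'$ is a disjoint union of simple paths and cycles, split each walk $\Upper(R)$ into a uniform ``whole wraps'' part handled by a per-cycle counter $\lfloor|\Upper(R)|/|Q|\rfloor$ and a remainder handled by $\pm1$ difference updates (splitting a wrapping range in two), and finish with a linear accumulation sweep. (Only a cosmetic remark: the number of fragments in $\Upper(a,b,p)$ is $(\expon(R)-2)p$, not $(\expon(R)-2)p+1$, but this does not affect the argument.)
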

\begin{proof}
By \cref{obs:upper_walk}, $C_{upper}[v]$ is the total number of times that walks $\Upper(R)$ visit the node $v \in V$. We will be able to compute these counters efficiently using the fact that graph $G$ has a particularly simple structure: it is a collection of disjoint cycles and paths. The same applies to the graph $G'=(V,E')$ that is computed in \cref{lem:G}. For a node $u$, by $\nnext(u)$ we denote the unique node $v$ such that $(u,v) \in E'$, and $\bot$ if no such node exists.

We can find all cycles in $G'$ using the DFS. Then we apply an algorithm using $\pm 1$ counters (as in the proof of \cref{lem:clower}) and additional counters $C'$ assigned to cycles. Initially all counters are equal to 0. For each cycle $Q$, let us order the nodes $v_1,\ldots,v_{|Q|} \in Q$ along the cycle (arbitrarily) and assign them consecutive id numbers $\id(v_i)=i$. 

For each walk $\Upper(R)$, $R=(a,b,p) \in \R(T)$, we check if its endpoints $v_1$ and $v_2$ ($v_1=\locus(T[a \dd a+p))$ and $v_2=\locus(T[b-2p \dd b-p))$) belong to a cycle. If not, we increment $C_{upper}[v_1]$; we also decrement $C_{upper}[\nnext(v_2)]$ if $\nnext(v_2) \ne \bot$. Otherwise, if $v_1,v_2$ belong to a cycle $Q$, we increase the cycle counter $C'[Q]$ by $\floor{\,|\Upper(R)|\, /\, |Q|\,}$. Moreover (for $v_1,v_2 \in Q$), if $\id(v_1) \le \id(v_2)$, we increment $C_{upper}[v_1]$ and decrement $C_{upper}[\nnext(v_2)]$ if $\id(v_2) < |Q|$. If, however, $\id(v_1) > \id(v_2)$, we increment $C_{upper}[v_1]$ and $C_{upper}(u)$ for the node $u \in Q$ with $\id(u)=1$ and decrement $C_{upper}[\nnext(v_2)]$, thus ``breaking the cyclicity''.

For each cycle $Q$, let us remove the arc $(v,\nnext(v))$ for vertex $v \in Q$ such that $\id(v)=|Q|$. This way $G'$ becomes acyclic; it can be viewed as a forest in which each tree is a path. For each node $v$ of the modified graph $G'$ in topological order, we add $C_{upper}[v]$ to $C_{upper}[\nnext(v)]$ if $\nnext(v) \ne \bot$. Finally, for each original cycle $Q$ in $G'$, we increase $C_{upper}[v]$ for all vertices $v \in Q$ by the counter $C'[Q]$. This way we have computed all the counters $C_{upper}$ as desired.

Let us analyze the complexity.
By \cref{lem:G}, graph $G'=(V,E')$ can be constructed in $\Oh(n)$ time. By \cref{thm:runs}, there are at most $n$ paths $\Upper(R)$ and all runs $R$ can be computed in $\Oh(n)$ time. The endpoints of walks $\Upper(R)$ can be located in $G'$ in $\Oh(n)$ time using weighted ancestor queries in $\CST(T)$ (\cref{thm:WAQ}). Finally, the computation of counters via DFS and topological ordering takes $\Oh(n)$ time. 
\end{proof}

\noindent
This concludes efficient computation of the numbers of overlapping and non-overlapping consecutive occurrences.

\begin{lemma}\label{lem:ov}
Values $\o(v)$ and $\n(v)$ for all explicit nodes $v$ of $\CST(T)$ can be computed in $\Oh(n)$ time.
\end{lemma}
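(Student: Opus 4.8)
The plan is to assemble the ingredients already in place. By \cref{lem:clower,lem:cupper}, the counters $C_{lower}[v]$ and $C_{upper}[v]$ can be computed for all explicit nodes $v$ of $\CST(T)$ in $\Oh(n)$ total time. I would then invoke \cref{lem:formula}, which expresses $\o(v)$ as the subtree sum $\sum_{w \in \subtree(v)}(C_{lower}[w]-C_{upper}[w])$ over the descendants of $v$ in $\CST(T)$. So the task reduces to evaluating all of these subtree sums at once, together with recovering $\n$.

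The core computational step is to evaluate the subtree sums simultaneously. To this end I would set $D[w]=C_{lower}[w]-C_{upper}[w]$ for each explicit node $w$ and perform a single bottom-up (post-order) traversal of $\CST(T)$, setting $\o(v)=D[v]+\sum_{v'}\o(v')$ where $v'$ ranges over the children of $v$ in $\CST(T)$. A straightforward induction on the tree shows this equals $\sum_{w \in \subtree(v)} D[w]$, matching \cref{lem:formula}. Since $\CST(T)$ has $\Oh(n)$ explicit nodes (by the square-count and suffix-tree node bounds recalled in \cref{ss:CST}) and each node and tree edge is touched a constant number of times, the traversal runs in $\Oh(n)$ time and yields $\o(v)$ for every explicit node $v$.

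Finally I would recover $\n(v)$ from the identity $\o(v)+\n(v)=\occ(v)$ established in the preliminaries. The values $\occ(v)$, being the number of leaves in the subtree of $v$, can be computed for all explicit nodes by one additional bottom-up pass in $\Oh(n)$ time, after which $\n(v)=\occ(v)-\o(v)$ is obtained in $\Oh(1)$ per node.

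I do not expect a genuine obstacle here: the substantive combinatorial and algorithmic work has already been carried out in \cref{lem:triangle,lem:clower,lem:cupper}, and this lemma is essentially an aggregation step. The only point requiring care is that the subtree sums and the $\occ$ computation must be carried out over $\CST(T)$ — including the extra explicit nodes corresponding to square halves — rather than over the plain suffix tree, so that the formula of \cref{lem:formula} applies verbatim and $\o(v)$ is obtained at exactly the nodes where it is needed.
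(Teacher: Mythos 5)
Your proposal is correct and matches the paper's own proof essentially verbatim: compute $C_{lower}$ and $C_{upper}$ via \cref{lem:clower,lem:cupper}, evaluate the subtree sums of \cref{lem:formula} by a single bottom-up traversal of $\CST(T)$, and recover $\n(v)=\occ(v)-\o(v)$ from a bottom-up leaf count. No further changes are needed.
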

\begin{proof}
We compute the counters $C_{lower}$ and $C_{upper}$ using \cref{lem:clower,lem:cupper}, respectively. By the formula from \cref{lem:formula}, for each node $v$ of $\CST(T)$ in the bottom-up order, $\o(v)$ can be computed as a sum of
$C_{lower}[v]-C_{upper}[v]$ and the sum of values $\o(w)$ for all explicit children $w$ of $v$. Such values can be computed via a bottom-up traversal in $\Oh(n)$ time.

Finally we recall that $\n(v)=\occ(v)-\o(v)$ and that $\occ(v)$ for all explicit nodes of $\CST(T)$ can be easily computed in $\Oh(n)$ time bottom-up.
\end{proof}

\newcommand{\cvov}{\mathit{cv\_ov}}
\newcommand{\cvnov}{\mathit{cv\_nov}}
\subsection{Computing coverage}
For a substring $S$ of $T$, we introduce the following notations:
\[\cvov(S) = \sum_{(i,j) \in \OOcc(S)} (j-i), \quad \cvnov(S) = \n(S) \cdot |S|.\]
As before, we denote $\cvov(v)=\cvov(\bar{v})$ and $\cvnov(v)=\cvnov(\bar{v})$ for nodes $v$ of $\CST(T)$.
The proof of the following observation provides intuition on these definitions.

\begin{observation}
For every substring $S$ of $T$, $\c(S)=\cvov(S)+\cvnov(S)$.
\end{observation}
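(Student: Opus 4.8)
The plan is to prove the identity by a single left-to-right sweep over the occurrences of $S$, accounting for the positions contributed incrementally by each occurrence and then matching those contributions with the two terms on the right-hand side.

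First I would list the starting positions of the occurrences of $S$ in $T$ as $i_1 < i_2 < \cdots < i_k$; there is at least one since $S$ is a substring of $T$. By definition $\c(S)=\bigl|\bigcup_{t=1}^{k}[i_t \dd i_t+|S|)\bigr|$. The key structural remark is that all these intervals have the same length $|S|$ and their left endpoints are strictly increasing, so their right endpoints $i_t+|S|$ are strictly increasing too; hence no interval is contained in an earlier one, and when the intervals are processed in order the already-covered prefix always ends exactly at $i_t+|S|$ after the $t$-th interval. I would also note that the consecutive occurrences of $S$ are precisely the adjacent pairs $(i_t,i_{t+1})$, so summing over $t=1,\ldots,k-1$ is the same as summing over all consecutive occurrences, which split into the overlapping ones (the set $\OOcc(S)$) and the non-overlapping ones.

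Next I would compute the number of fresh positions contributed by each occurrence. The first occurrence contributes $|S|$. For $t\ge 1$, the occurrence at $i_{t+1}$ adds exactly the positions in $[i_t+|S| \dd i_{t+1}+|S|)$, whose count is $\min(|S|,\,i_{t+1}-i_t)$. Splitting by the type of $(i_t,i_{t+1})$: if it is overlapping, i.e.\ $i_{t+1}<i_t+|S|$, the contribution is $i_{t+1}-i_t=j-i$; if it is non-overlapping, i.e.\ $i_{t+1}\ge i_t+|S|$, the contribution is exactly $|S|$, with the boundary case $i_{t+1}=i_t+|S|$ falling into this branch and both expressions agreeing. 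Telescoping the increments therefore yields
\[
\c(S) \;=\; |S| \;+\; \sum_{(i,j)\in\OOcc(S)}(j-i) \;+\; \bigl(\text{number of non-overlapping consecutive occurrences}\bigr)\cdot|S|.
\]

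Finally I would substitute the definitions. The middle sum is exactly $\cvov(S)$. Since $\n(S)$ equals one plus the number of non-overlapping consecutive occurrences, the last term equals $(\n(S)-1)\cdot|S|$, and merging it with the leading $|S|$ gives $\n(S)\cdot|S|=\cvnov(S)$; hence $\c(S)=\cvov(S)+\cvnov(S)$. The only delicate point, which I would state explicitly rather than grind through, is the boundary case $j=i+|S|$ and the bookkeeping of the ``$+1$'' in the definition of $\n(S)$, which absorbs the single extra full copy of $|S|$ coming from the leftmost occurrence; everything else is a routine telescoping of interval lengths, so I do not expect any genuine obstacle here.
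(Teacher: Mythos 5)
Your proof is correct and takes essentially the same route as the paper's: both decompose the set of covered positions according to consecutive-occurrence pairs, attributing $j-i$ to each overlapping pair and $|S|$ to each non-overlapping one, with the leftmost occurrence's full block of $|S|$ positions absorbed by the ``$+1$'' in $\n(S)$ --- the paper merely phrases this as charging each covered position backward to the rightmost occurrence starting before it, whereas you telescope forward over fresh contributions. The only blemish is that the set of fresh positions contributed by $i_{t+1}$ is $[\max(i_{t+1},i_t+|S|) \dd i_{t+1}+|S|)$ rather than $[i_t+|S| \dd i_{t+1}+|S|)$ as you wrote, but your stated count $\min(|S|,\,i_{t+1}-i_t)$ and the ensuing case split are right, so nothing is lost.
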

\begin{proof}
Let us assign each position $k$ of $T$ that is covered by an occurrence of $S$ to the rightmost occurrence $T[i \dd i+|S|)$ of $S$ with $i<k$. Let $j$ be the next occurrence of $S$ to the right of position $i$ (then $j>k$), if any. If $j$ exists and $(i,j) \in \OOcc(S)$, then position $k$ is counted in $\cvov(S)$. Otherwise position $k$ is counted in $\cvnov(S)$.
\end{proof}

Values $\cvnov(v)$ for explicit nodes $v$ of $\CST(T)$ can be easily computed using values $\n(v)$ computed in \cref{lem:ov}. \cref{lem:triangle} yields the following formula for $\cvov(v)$.
\begin{observation}
For a node $v$ of $\CST(T)$,
$\cvov(v)=\sum_{R \in \R(T)} \#_v(\Triangle(R)) \cdot \per(R)$.
\end{observation}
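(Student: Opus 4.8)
The plan is to read off the formula from \cref{lem:triangle} by turning the correspondence it establishes into a multiplicity-preserving bijection. Recall that $\cvov(v)=\sum_{(i,j)\in\OOcc(\bar v)}(j-i)$, so it suffices to group the overlapping consecutive occurrences of $\bar v$ according to the run that ``witnesses'' them and to observe that within each group the gap $j-i$ is constant and equal to the period of that run.

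Concretely, let $d=|\bar v|$. First I would set up a map $\Phi$ sending a positioned fragment $T[i\dd i+d)\in\Triangle(R)$ matching $\bar v$ (ranging over all runs $R\in\R(T)$ and all admissible starts $i$) to the pair $(i,\,i+\per(R))$. By the backward direction of \cref{lem:triangle}, $\Phi$ indeed lands in $\OOcc(\bar v)$: the two occurrences at $i$ and $i+\per(R)$ overlap and are consecutive. Conversely, the forward direction of \cref{lem:triangle} shows that every $(i,j)\in\OOcc(\bar v)$ is the image under $\Phi$ of some such fragment, so $\Phi$ is surjective.

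The crux is injectivity, i.e.\ that $\Phi$ is a genuine bijection at the level of positioned fragments. Suppose two fragments $T[i\dd i+d)\in\Triangle(R)$ and $T[i'\dd i'+d)\in\Triangle(R')$, both matching $\bar v$, have the same image $(i,j)=(i',j')$. Then $i=i'$ and $p:=\per(R)=j-i=j'-i'=\per(R')$, so $R$ and $R'$ are both runs of smallest period $p$ whose ranges contain $[i\dd i+d+p)$. But such a fragment extends to a \emph{unique} maximal periodic fragment of smallest period $p$, which is exactly the run produced in the proof of \cref{lem:triangle}; hence $R=R'$ and the two positioned fragments coincide. This last point relies on the primitivity of the period's root and the synchronization property (\cref{lem:synch}), as already used in \cref{lem:triangle}. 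Thus $\Phi$ is a bijection.

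Finally I would substitute along $\Phi$: since it preserves the contribution $j-i=\per(R)$ of each occurrence,
\[
\cvov(v)=\sum_{(i,j)\in\OOcc(\bar v)}(j-i)
=\sum_{R\in\R(T)}\ \sum_{\substack{T[i\dd i+d)\in\Triangle(R)\\ T[i\dd i+d)=\bar v}}\per(R)
=\sum_{R\in\R(T)}\#_v(\Triangle(R))\cdot\per(R),
\]
where the inner count is precisely $\#_v(\Triangle(R))$ by definition. The main obstacle is the bijectivity bookkeeping, i.e.\ ensuring that distinct positioned fragments across all runs map to distinct overlapping consecutive occurrences; but this is already implicit in \cref{lem:triangle}, whose two directions are mutually inverse on positioned fragments once one invokes the uniqueness of the maximal periodic extension.
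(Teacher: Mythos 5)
Your proposal is correct and matches the paper's (implicit) argument: the paper derives this observation directly from \cref{lem:triangle}, exactly as you do, by pairing each overlapping consecutive occurrence $(i,j)$ of $\bar v$ with the unique run $R$ and fragment $T[i\dd i+|\bar v|)\in\Triangle(R)$ witnessing it, so that each occurrence contributes $j-i=\per(R)$. Your extra care about injectivity (uniqueness of the maximal periodic extension) is the right justification and is the same mechanism the paper uses inside the proof of \cref{lem:triangle}.
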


\noindent
Now $\cvov$ values can be computed similarly as $\o$ values were computed in \cref{ss:o}. We just need to multiply counter updates by periods of respective runs.

\begin{lemma}\label{lem:cvov}
The values $\cvov(v)$ for all explicit nodes $v$ of $\CST(T)$ can be computed in $\Oh(n)$ time.
\end{lemma}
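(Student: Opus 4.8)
The plan is to mimic the computation of $\o(v)$ from \cref{ss:o} verbatim, changing only one thing: every counter update contributed by a run $R=(a,b,p)$ is scaled by its period $\per(R)=p$. Concretely, I would introduce period-weighted counters
\[C_{lower}^{\per}[v]=\sum_{R \in \R(T)} \#_v(\Lower(R))\cdot \per(R),\qquad C_{upper}^{\per}[v]=\sum_{R \in \R(T)} \#_v(\Upper(R))\cdot \per(R),\]
and compute $\cvov(v)$ from them by the period-weighted analogue of \cref{lem:formula}, namely
\[\cvov(v)=\sum_{w \in \subtree(v)} \bigl(C_{lower}^{\per}[w]-C_{upper}^{\per}[w]\bigr).\]

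First I would justify this formula. The derivation in \cref{lem:formula} rewrites $\#_v(\Triangle(R))$, for a fixed run $R$, as the difference between the number of indices $i \in [a \dd b-2p]$ for which $v$ lies on the path from the root to $\locus(T[i \dd b-p])$ (the $\Lower$ side) and those for which $v$ lies on the path to $\locus(T[i \dd i+p))$ (the $\Upper$ side), and then re-sums over $\subtree(v)$. Each step of this chain is linear in the per-run summand, so multiplying the contribution of run $R$ by the constant $\per(R)$ commutes with all the manipulations; combined with the observation that $\cvov(v)=\sum_{R \in \R(T)} \#_v(\Triangle(R))\cdot\per(R)$, this yields exactly the weighted formula above.

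It then remains to compute the two weighted counters, which follows \cref{lem:clower,lem:cupper} with $\pm p$ updates in place of $\pm 1$ ones. For $C_{lower}^{\per}$ I would, for each run $R=(a,b,p)$, increment the bottom-endpoint counter of the path $\Lower(R)$ by $p$ and decrement the counter at the $\ST'(T)$-parent of its top endpoint by $p$, then accumulate bottom-up along $\ST'(T)$; since no two fragments of a single $\Lower(R)$ match, this correctly sums the periods of all runs whose $\Lower$ path covers $v$. For $C_{upper}^{\per}$ I would run the traversal of \cref{lem:cupper} on the path/cycle graph $G'$, again replacing every $\pm 1$ update by $\pm p$. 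The only point needing care is the cycle counter: when a walk $\Upper(R)$ wraps around a cycle $Q$, the full-wrap contribution $\floor{|\Upper(R)|/|Q|}$ must be replaced by $p\cdot\floor{|\Upper(R)|/|Q|}$ before being added to $C'[Q]$, and the two ``breaking the cyclicity'' increments likewise become $\pm p$. All increments are bounded by $p \le n$ and there are $\Oh(n)$ of them, so the arithmetic stays within machine words and the time bounds of \cref{lem:clower,lem:cupper} are preserved. Running the two weighted traversals and then one bottom-up pass over $\CST(T)$ to evaluate the formula costs $\Oh(n)$ in total. The main obstacle is purely bookkeeping — verifying that the already case-split cycle logic of \cref{lem:cupper} scales consistently under the period weight — and no new combinatorial argument is required.
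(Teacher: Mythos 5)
Your proposal matches the paper's proof essentially verbatim: the paper also defines the period-weighted counters $C'_{lower}[v]=\sum_{R}\#_v(\Lower(R))\cdot\per(R)$ and $C'_{upper}[v]=\sum_{R}\#_v(\Upper(R))\cdot\per(R)$, derives $\cvov(v)=\sum_{w\in\subtree(v)}(C'_{lower}[w]-C'_{upper}[w])$ by re-running the argument of \cref{lem:formula}, and computes the counters via \cref{lem:clower,lem:cupper} with $\pm\per(R)$ updates and cycle counters scaled to $\floor{|\Upper(R)|/|Q|}\cdot\per(R)$. This is correct and is the same approach.
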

\begin{proof}
Let
\[C'_{upper}[v]=\sum_{R \in \R(T)} \#_v(\Upper(R))\cdot \per(R),\quad
C'_{lower}[v]=\sum_{R \in \R(T)} \#_v(\Lower(R))\cdot \per(R).\]
Following the proof of \cref{lem:formula} it can be readily verified that for every explicit node $v$,
\[\cvov(v)=\sum_{w \in \subtree(v)}(C'_{lower}[w]-C'_{upper}[w]).\]
The counters $C'_{lower}[v]$ ($C'_{upper}[v]$) for all explicit nodes can be computed as in \cref{lem:clower} (\cref{lem:cupper}, respectively), where instead of $\pm 1$ counters, for each path $\Lower(R)$ (walk $\Upper(R)$, respectively), $R \in \R(T)$, we add and subtract $\per(R)$ in the respective nodes (and increase cycle counters $C'[Q]$ by amounts $\floor{\,|\Upper(R)|\, /\, |Q|\,} \cdot \per(R)$).
\end{proof}

\noindent
This concludes the construction of $\CST(T)$.
\thmCST*
\begin{proof}
\cref{lem:CSTstructure} can be used to construct the tree structure of $\CST(T)$. We compute $\occ(v)$ for all explicit nodes of $\CST(T)$ in a bottom-up traversal and $\o(v)$ using \cref{lem:ov}, which lets us compute $\n(v)=\occ(v)-\o(v)$ for all explicit nodes. Then for all explicit nodes we compute $\cvov(v)$ using \cref{lem:cvov}, which lets us compute $\c(v)$ for all explicit nodes using values $\cvnov(v)$ that, in turn, depend on $\n(v)$. Each of the lemmas, as well as the bottom-up processing, requires $\Oh(n)$ time.
\end{proof}

\section{Solution to \AllPartialCovers}\label{sec:APC}

An $\Oh(n \log n)$-time solution to \AllPartialCovers from \cite{DBLP:journals/algorithmica/KociumakaPRRW15} is based on computing the upper envelope of $\Oh(n)$ line segments, each connecting points $(|v|,\c(v))$ and $(|v|-k,\c(v)-k\cdot\n(v))$ constructed for an edge of $\CST(T)$ from $v$ to its parent $v'$ containing $k$ implicit nodes. An upper envelope of $\Oh(n)$ line segments can be computed in $\Oh(n \log n)$ time \cite{DBLP:journals/ipl/Hershberger89}.

We show that the \AllPartialCovers problem can be solved in $\Oh(n)$ time using the following observation.
A substring $C$ of $T$ is called \emph{branching} if the locus of $C$ in $\ST(T)$ is a branching node.

\begin{lemma}\label{lem:explicit}
If $C$ is a substring of $T$, then there is a substring $C'$ of $T$ such that $|C'|=|C|$, $\c(C')\ge \c(C)$ and $C'$ is branching or a suffix of $T$.
\end{lemma}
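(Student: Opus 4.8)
The plan is to reduce to a single interesting configuration and then slide the length-$d$ window to the right. Write $d=|C|$ (we may assume $d\ge 1$). If $\locus(C)$ in $\ST(T)$ is a branching node, or if $C$ is a suffix of $T$, then we simply take $C'=C$. Since $C$ is a substring of $T$, its locus is never a leaf, because a leaf is labeled by a suffix of $T\#$ and hence contains the end-marker $\#$, which $C$ does not; and the compacted suffix tree has no non-branching internal explicit nodes. Thus the only case left to treat is that $\locus(C)$ is a (strictly) implicit node and $C$ is not a suffix of $T$. First I would record the trivial bound that every occurrence of $C$ covers $d$ positions, so $\c(C)\ge d$ always; this lets me compare against the length-$d$ suffix $T[n-d+1\dd n]$ of $T$, which is a suffix of $T$ and satisfies $\c(T[n-d+1\dd n])\ge d$, to dispose of degenerate subcases.

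For the main case I would walk down from $\locus(C)$ along its edge to the nearest explicit descendant $v$. Since no branching occurs strictly between $\locus(C)$ and $v$, the string $\bar v$ extends $C$ to the right, say $\bar v=CZ$ with $|Z|\ge 1$, and $C$ and $\bar v$ share exactly the same set $Q$ of occurrence start positions in $T$. If $v$ is a leaf then $|Q|=1$, so $\c(C)=d$ and the length-$d$ suffix of $T$ already dominates $C$. Otherwise $v$ is a branching node, so $\bar v$ occurs at least twice and is a genuine substring of $T$ (avoiding $\#$). I would then set $C'$ to be the length-$d$ suffix of $\bar v$. Each occurrence of $\bar v$ at a position $p\in Q$ induces an occurrence of $C'$ at position $p+|Z|$, so the occurrence set of $C'$ contains the shifted set $Q+|Z|$; by translation invariance of coverage for a fixed window width $d$,
\[
\c(C')\ \ge\ \Bigl|\bigcup_{p\in Q}[\,p+|Z|\dd p+|Z|+d)\Bigr|\ =\ \Bigl|\bigcup_{p\in Q}[\,p\dd p+d)\Bigr|\ =\ \c(C).
\]

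It remains to argue that $C'$ is branching, which I expect to be the crux. The key point is that $v$ is branching, so over $p\in Q$ the characters $T[p+|\bar v|]$ following the occurrences of $\bar v$ take at least two distinct values (counting $\#$ as a possible value for an occurrence ending at position $n$). For each such $p$, the occurrence of $C'$ at $p+|Z|$ is followed by the very same character $T[p+|Z|+d]=T[p+|\bar v|]$; hence the occurrences of $C'$ already present in $Q+|Z|$ are followed by at least two distinct characters, and a fortiori so is the full occurrence set of $C'$. Therefore $\locus(C')$ has at least two children, i.e.\ $C'$ is branching, as required. The one subtlety to watch is the interaction with the end-marker: I must keep $\bar v$, and hence $C'$, free of $\#$, which is guaranteed precisely because the branching case forces $\bar v$ to occur at least twice, while the single-occurrence (leaf) case is routed instead to the length-$d$ suffix of $T$.
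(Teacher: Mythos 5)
Your proof is correct and rests on the same underlying idea as the paper's: slide the length-$|C|$ window rightward along the forced extension of $C$, using the fact that coverage is invariant under translating a fixed occurrence set and can only grow when occurrences are added. The only difference is packaging — the paper shifts one character at a time ($C_{i+1}$ is the length-$|C|$ suffix of $C_i a$ for the unique following character $a$) and argues termination via the rightmost occurrence moving right, whereas you jump in one step to the nearest explicit descendant $v$ of $\locus(C)$ in $\ST(T)$ and read off branching of the length-$|C|$ suffix of $\bar{v}$ directly from the branching of $v$, with the leaf case routed to the length-$|C|$ suffix of $T$.
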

\begin{proof}
Let $C_0=C$.
If $C_0$ is branching or $C_0$ is a suffix of $T$, we are done.
Otherwise, all occurrences of $C_0$ in $T$ are followed by the same character.
Let $a$ be this character, $X$ be $C$ without its first character, and $C_1=Xa$.
We have $|C_1|=|C_0|$ and $\c(C_1)\ge \c(C_0)$.
We use this construction to obtain substrings $C_1,C_2,\ldots$
The sequence ends at the first substring that is branching or a suffix of $T$; such a substring exists since the rightmost occurrence of $C_i$ in $T$, for $i \ge 1$,
is located to the right of the rightmost occurrence of $C_{i-1}$ in $T$.
Let $C_k$, for $k \ge 1$, be the last substring in this sequence.
By the construction, $|C_k|=|C_0|=|C|$, $\c(C_k)\ge \c(C)$ and $C_k$ is branching or a suffix of $T$.
We choose $C'=C_k$.
\end{proof}

\newcommand{\shortest}{\mathit{shortest}}
By the lemma, in the solution to \AllPartialCovers it suffices to iterate over all suffixes of $T$ and branching nodes of $\ST(T)$.
We obtain the following result that was already stated in \cref{sec:intro}.

\begin{algorithm}
\lFor{$i:=1$ \KwSty{to} $n$}{$\shortest[n-i+1] := T[i \dd n]$}
\ForEach{branching node $v$ of $\ST(T)$}{
    \If{$|\bar{v}|<|\shortest[\c(v)]|$}{
        $\shortest[\c(v)]:=\bar{v}$\;
    }
}
\For{$i:=n-1$ \KwSty{down to} $1$}{
    \If{$|\shortest[i]| > |\shortest[i+1]|$}{
        $\shortest[i]:=\shortest[i+1]$\;
    }
}
\caption{Solution to \AllPartialCovers.}\label{algo:APC}
\end{algorithm}

\thmPC*

\begin{proof}
We apply \cref{algo:APC}.
Clearly, the algorithm works in $\Oh(n)$ time.
Let us argue for its correctness.

In the algorithm an auxiliary array $\shortest$ is used that stores fragments of $T$ represented in $\Oh(1)$ space each.
By \cref{lem:explicit}, after the foreach-loop, $\shortest[\alpha]=C$ for $\alpha \in [1 \dd n]$ if $C$ is a shortest substring of $T$ such that $\c(C) = \alpha$.
At the end, $\shortest[\alpha]=C$ if $C$ is a shortest substring of $T$ such that $\c(C) \ge \alpha$.
Hence, $\shortest[\alpha]$ is a shortest $\alpha$-partial cover of $T$ by definition.
\end{proof}

\newcommand{\MinLower}{\mathit{MinLower}}
\newcommand{\Bottom}{\textsf{Bottom}}
\newcommand{\MinBottom}{\mathit{MinBottom}}
\newcommand{\Bottoms}{\mathit{Bottoms}}
\newcommand{\ML}{\mathit{ML}}
\newcommand{\MB}{\mathit{MB}}
\section{Reporting bounded-gap overlapping consecutive occurrences}\label{app:cons}
This section is devoted to the proof of \cref{thm:cons}. We first describe the data structure and then the query algorithm. Finally, we show how the data structure can be constructed efficiently.

The idea behind the data structure is as follows. By \cref{lem:triangle}, an overlapping consecutive occurrence $(i,j)$ of $S$ in $T$ corresponds to a fragment $T[i \dd i+|S|)$ that belongs to a set $\Triangle(R)$ for some run $R$ in $T$; here $j=i+\per(R)$.

Assume first that $\beta=|S|-1$ holds in a query. Given $S$, we would like to list all runs $R$ and positions $i$ such that $T[i \dd i+|S|) \in \Triangle(R)$; then we will report consecutive occurrences of the form $(i,i+\per(R))$. We can mark in $\ST(T)$ all explicit nodes $v$ such that $\bar{v}$ matches a fragment from $\bigcup_R \Lower(R)$, called the ``lower nodes'', and all explicit nodes $v$ such that $\bar{v}$ matches a longest fragment in some $\Triangle(R)$, called the ``bottom nodes''. Now given $S$, we find all explicit descendants $v$ of the locus of $S$ that are ``lower nodes''. Finally, for each $v$, we find all its descendants $w$ in $\ST'(T)$ that are ``bottom nodes'', and each such node $w$ is assumed to store all the corresponding runs $R$ such that $w$ is the bottommost node in $\Triangle(R)$.

For a general $\beta<|S|-1$, we use iterative range minimum queries to generate runs $R$ such that $T[i \dd i+|S|) \in \Triangle(R)$ in the order of non-decreasing periods.

We proceed to a formal implementation of this intuition.

\paragraph{Data structure.}
The data structure contains the suffix tree $\ST(T)$ and the tree of the suffix links $\ST'(T)$, both stored over the same set of nodes.
If the size of the alphabet of $T$ is constant, explicit children of an explicit node are stored in a list; otherwise, they are stored using perfect hashing~\cite{DBLP:journals/jacm/FredmanKS84} by first characters of edges leading to them.
Each node $v$ of $\ST(T)$ stores the smallest period of a run $R$ such that $v$ matches a fragment contained in the set $\Lower(R)$:
\[\MinLower(v)=\min(\{\infty\} \cup \{\per(R)\,:\,\#_v(\Lower(R))>0,\,R \in \R(T)\}).\]
For a run $R$ in $T$, we denote string label of the bottommost node in $\Triangle(R)$ as $\Bottom(R)$:
\[\Bottom(a,b,p)=T[a \dd b-p].\]
Each node $v$ of $\ST'(T)$ stores the set $\Bottoms(v)$ of all runs $R$ in $T$ such that $\bar{v}=\Bottom(R)$ sorted by $\per(R)$ as well as the minimal such period, denoted as $\MinBottom(v)$. That is, if
\[\{R \in \R(T)\,:\,\bar{v}=\Bottom(R)\}\ =\ \{R_1,\ldots,R_t\}\]
and $\per(R_1) \le \per(R_2) \le \dots \le \per(R_t)$, then $\Bottoms(v)=(R_1,\ldots,R_t)$ and $\MinBottom(v)$ is equal to $\per(R_1)$ if $t \ge 1$ and $\infty$ otherwise.

In each of the trees $\ST(T)$ and $\ST'(T)$, we number the explicit nodes with consecutive positive integers according to pre-order. Let $N(v)$, $N'(v)$ be the numbers assigned to $v$ in the respective trees. Each explicit node $v$ of $\ST(T)$ stores the intervals $I(v)$, $I'(v)$ of numbers of all explicit nodes in its subtree in $\ST(T)$ and in $\ST'(T)$, respectively. We construct arrays containing values $\MinLower(v)$ and $\MinBottom(v)$ according to the pre-order numberings:
\[
    \ML[i] = \MinLower(v)\text{ where }N(v)=i,\quad
    \MB[i] = \MinBottom(v)\text{ where }N'(v)=i
\]
and construct data structures for range minimum queries (RMQ)~\cite{DBLP:conf/latin/BenderF00} on these two arrays.

\paragraph{Data structure size.}
The size of the data structure is $\Oh(n)$ as the trees $\ST(T),\ST'(T)$ (cf.\ \cref{thm:ST}), the data structure for perfect hashing~\cite{DBLP:journals/jacm/FredmanKS84}, set of runs (\cref{thm:runs}) and the RMQ data structure~\cite{DBLP:conf/latin/BenderF00} take $\Oh(n)$ space.

\paragraph{Queries.}
Given a pattern $S$ and threshold $\beta$, we find the locus of $S$ in $\ST(T)$. 
Let $u$ be the nearest explicit descendant of the locus.
We use repetitive RMQs on array $\ML$ to find all explicit nodes $v$ in the subtree of $u$ such that $\MinLower(v) \le \beta$.
Precisely, we first find the minimum of $\ML[a \dd b]$ where $I(u)=[a \dd b]$.
If the minimum is greater than $\beta$, no nodes are reported.
Otherwise, we report a node $v$ for which $N(v)=k$ and $\ML[k]=\min\{\ML[a],\ldots,\ML[b]\}$ and recursively ask RMQs in $\ML[a \dd k-1]$ and in $\ML[k+1 \dd b]$.

For each node $v$ in the subtree of $u$ with $\MinLower[v] \le \beta$, we use repetitive RMQs on array $\MB$ to find all explicit nodes $w$ in the subtree of $v$ in $\ST'(T)$ such that $\MinBottom(w) \le \beta$.
Finally, for each such node $w$, we list all runs $R \in \Bottoms(w)$ such that $\per(R) \le \beta$.
Each reported run $R=(a,b,p)$ produces an overlapping consecutive occurrence $(i,j) \in \OOcc(S)$ such that $i=a+|\bar{w}|-|\bar{v}|$ and $j=i+p$.

\paragraph{Correctness of queries.} First we show that if $(i,j) \in \OOcc(S)$ and $j-i \le \beta$, then the algorithm reports $(i,j)$. By \cref{lem:triangle}, there exists a run $R=(a,b,p) \in \R(T)$ with period $p=j-i$ such that $T[i \dd i+m) \in \Triangle(R)$. Hence, $T[i \dd b-p] \in \Lower(R)$. Let $u$ and $v$ be the loci of $T[i \dd i+m)$ and $T[i \dd b-p]$, respectively, in $\ST(T)$. Then $v$ is an explicit descendant of $u$ (cf.\ \cref{{obs:lower_upper}}) and $\MinLower(v) \le p \le \beta$. Hence, $v$ will be reported in one of the RMQs in the first phase of the query algorithm. Let $w$ be the locus of $\Bottom(R)=T[a \dd b-p]$ in $\ST(T)$. Then $w$ is an explicit descendant of $v$ in $\ST'(T)$ and $\MinBottom(w) \le p \le \beta$. Hence, $w$ will be reported in the second phase of the query algorithm. Consequently, the run $R \in \Bottoms(w)$ will be reported, as $\per(R)=p \le \beta$. Finally, we have $a+|\bar{w}|-|\bar{v}|=a+(b-p-a+1)-(b-p-i+1)=i$ and $i+p=j$, so $(i,j)$ will indeed be reported in the query algorithm.

Now we show that if the algorithm reports $(i,j)$, then $(i,j) \in \OOcc(S)$ and $j-i \le \beta$. The second condition is obvious from the query algorithm, so let us focus on the first condition. Let us trace back how $(i,j)$ was obtained: let $u$ be the locus of $S$ in $\ST(T)$, $v$ be an explicit descendant of $u$ in $\ST(T)$ such that $\MinLower(v)\le \beta$, $w$ be an explicit descendant of $v$ in $\ST'(T)$ such that $\MinBottom(w) \le \beta$, and $R=(a,b,p) \in \Bottoms(w)$ be a run such that $i=a+|\bar{w}|-|\bar{v}|$ and $j=i+p$. There exist strings $U,V$ such that $SU$ is the string label of $v$ and $VSU$ is the string label of $w$. Further, $|\bar{w}|-|\bar{v}|=|V|$. Then $VSU=\Bottom(R)=T[a \dd b-p]$, so $S=T[a+|V| \dd a+|VS|)\equiv T[i \dd i+m)$. We have $T[i \dd i+m) \in \Triangle(R)$ as $i \ge a$, $i+m-1 \le b-p$, and $m>\beta \ge p$, so by \cref{lem:triangle}, $(i,j) \in \OOcc(S)$.

Finally, we show that each pair $(i,j) \in \OOcc(S)$ such that $j-i \le \beta$ is reported only once. Assume to the contrary that $(i,j)$ is reported twice, for runs $R_1$ and $R_2$. We have $\per(R_1)=\per(R_2)=j-i$ and each of the runs $R_1$, $R_2$ extends the fragment $T[i \dd i+m)$, where $m>\per(R_1)=\per(R_2)$. Hence, we must have $R_1=R_2=R=(a,b,p)$. This means that $(i,j)$ is reported for two different explicit descendants $v_1$, $v_2$ of $\locus(S)$, such that $\#_{v_1}(R), \#_{v_2}(R)>0$. For each of them, the bottom node $w$ is the same, as it is uniquely determined by $\Bottom(R)$. We have $i=a+|\bar{w}|-|\overline{v_1}|=a+|\bar{w}|-|\overline{v_2}|$, so $|\overline{v_1}|=|\overline{v_2}|$. This implies that $v_1=v_2$, as all fragments in the set $\Lower(R)$ clearly have different lengths; a contradiction.

\paragraph{Query complexity.}
Finding the locus of $S$ takes $\Oh(m)$ worst-case time using lists if the alphabet is constant and perfect hashing~\cite{DBLP:journals/jacm/FredmanKS84} otherwise. Each RMQ takes $\Oh(1)$ time~\cite{DBLP:conf/latin/BenderF00} and the number of queries asked is $\Oh(\output)$. The query time is $\Oh(m+\output)$ as desired.

\paragraph{Construction algorithm and complexity.} The suffix tree of $T$ can be constructed in $\Oh(n)$ time by \cref{thm:ST}. If the alphabet of $T$ is superconstant, the data structure for perfect hashing~\cite{DBLP:journals/jacm/FredmanKS84} requires $\Oh(n)$ expected time construction. All runs in $T$ can be generated in $\Oh(n)$ time by \cref{thm:runs}. To construct the sets $\Bottoms(v)$, we sort all runs by non-decreasing periods using bucket sort, find the loci of fragments $\Bottom(R)$ using weighted ancestor queries of \cref{thm:WAQ}, and insert the runs to the respective sets $\Bottoms(v)$ in order. This also allows to compute $\MinBottom(v)$ for each node $v$ of $\ST'(T)$. Now the pre-order traversal of $\ST(T)$ and $\ST'(T)$ as well as computation of arrays $\MB$ and $\ML$ takes $\Oh(n)$ time provided that values $\MinLower(v)$ for explicit nodes $v$ can be computed efficiently. Preprocessing for RMQ takes $\Oh(n)$ time~\cite{DBLP:conf/latin/BenderF00}.

Finally, to compute $\MinLower(v)$, we ask an RMQ on $\MB[a \dd b]$ where $I'(v)=[a \dd b]$. Let $p$ be the result of this query. If $|\bar{v}|>p$, we set $\MinLower(v)=p$. Otherwise, $\#_v(\Lower(R))=0$ for all runs $R$ in $T$, and therefore $\MinLower(v)=\infty$.

\medskip
We obtain the following result that was already stated in \cref{sec:intro}.
\thmCons*

\section{Conclusions}\label{sec:concl}
We have designed the first linear-time algorithm computing the Cover Suffix Tree. We have shown several applications of this result, some of which follow directly from previous work. Experimental comparison of our algorithms for computing the Cover Suffix Tree and the set of seeds in a string against implementations of existing methods from~\cite{DBLP:journals/tcs/CzajkaR21} is left as future work.

It remains an open problem if our approach can help to improve upon the $\Oh(n \log n)$-time algorithm of Brodal et al.~\cite{DBLP:conf/icalp/BrodalLOP02} for constructing MAST.

\subsection*{Acknowledgements}
The author thanks the anonymous reviewers for reading the manuscript carefully and providing several useful suggestions.

\bibliographystyle{plainurl}
\bibliography{references2}

\end{document}